\theoremstyle{plain}
\newtheorem{theorem}{Theorem}[section]
\newtheorem{proposition}[theorem]{Proposition}
\newtheorem{lemma}[theorem]{Lemma}
\theoremstyle{definition}
\newtheorem{definition}[theorem]{Definition}
\newcommand {\Set}[1] {\mathbb{#1}}
\newcommand{\setR}[0]{\Set{R}}
\newcommand{\slaz}[0]{{\setminus\{0\}}}
\newcommand {\proofread}[1]{ }
\newcommand{\comment}[1]{}
\newcommand{\pd}[2]{\frac{\partial #1}{\partial #2}}
\newcommand{\kappaI}[0]{^{(1)}\!\kappa}
\newcommand{\kappaII}[0]{^{(2)}\!\kappa}
\newcommand{\kappaIII}[0]{^{(3)}\!\kappa}
\newcounter{saveenum}
\newcommand{\HOX}[1]{}
\title{A restatement of the normal form theorem for area metrics}
\author[Dahl]{Matias F. Dahl} 
\address{
Matias Dahl\\  
Aalto University\\
Mathematics\\
P.O. Box 11100\\
FI-00076 Aalto\\
Finland
}
\urladdr{http://www.math.tkk.fi/\textasciitilde{}fdahl/}
\subjclass[2000]{
78A25} 
\date{\today}
\begin{document}
\begin{abstract}
  An \emph{area metric} is a $0\choose 4$-tensor with certain
  symmetries on a $4$-manifold that represent a non-dissipative linear
  electromagnetic medium. A recent result by Schuller, Witte and
  Wohlfarth provides a pointwise normal form theorem for such area
  metrics. This result is similar to the Jordan normal form theorem
  for $1\choose 1$-tensors, and the result shows that any area metric
  belongs to one of $23$ metaclasses with explicit coordinate
  expressions for each metaclass. In this paper we restate and prove
  this result for skewon-free $2\choose 2$-tensors and show that in general, each
  metaclasses has three different coordinate
  representations, and each of metaclasses I, II, $\ldots$, VI, VII need only one 
coordinate representation.
\end{abstract}

%
%
%

\maketitle

\section{Introduction}
An \emph{area  metric} on a $4$-manifold
 $N$ is a $0\choose
4$-tensor $G$ on $N$ that gives a symmetric (possibly indefinite) inner product
for bivectors on $N$. 
The motivation for studying area metrics is that they appear as a
natural generalisation of Lorentz metrics in physics. For example, in
relativistic electromagnetics, a Lorentz metric always describes an
isotropic medium, but using an area metric one can also model
anisotropic medium, where differently polarised waves can propagate
with different wave-speeds. 
Area metrics also appear when studying the propagation of a photon in
a vacuum with a first order correction from quantum electrodynamics
\cite{DruHath:1980, Schuller:2010}.
The Einstein field equations have also been generalised
into equations where the unknown field is an area metric
\cite{PSW_JHEP:2007}.
For further examples, see \cite{PunziEtAl:2009, Schuller:2010}, and
for the differential geometry of area metrics, see 
\cite{SchullerWohlfarth:2006, PSW_JHEP:2007}.

The present work is motivated by a recent result by
Schuller, Witte and Wohlfarth \cite{Schuller:2010} which is a 
normal form theorem for area metrics on a $4$-manifold $N$.
Essentially, this theorem states that there are $23$ normal forms for
area metrics, 
and if $G$ is any area metric on $N$ and $p\in N$, one can find
coordinates around $p$ such that $G\vert_p$ is one of the normal forms
(up to simple operations) \cite[Theorem 4.3]{Schuller:2010}.
What is more,  16 of the metaclasses are
unphysical in the sense that Maxwell's equations are
not well-posed in these metaclasses. This leaves only 7 metaclasses
that can describe physically relevant electromagnetic medium \cite{Schuller:2010}.
The importance of this result is that in arbitrary coordinates an area
metric depends on $21$ real numbers, but each normal form depend
on at most $6$ real numbers and $3$ signs $\pm 1$.  This
reduction in variables has proven particularly useful when studying properties of
the Fresnel equation (or dispersion equation) for a propagating
electromagnetic wave \cite {Schuller:2010, FavaroBergamin:2011}. 
Namely, without assumptions on either the area
metric or the coordinates, the Fresnel equation usually leads to
algebraic expressions that are quite difficult to manipulate, even
with computer algebra \cite{Dahl:2011:Closure}.

In addition to area metrics, there are multiple other ways to model
the medium in (relativistic) electrodynamics. Another common formalism
is the so called \emph{pre-metric} formulation, where the medium is
modelled by an antisymmetric $2\choose 2$-tensor $\kappa$ on a
$4$-manifold $N$.  In this formalism, an electromagnetic medium
$\kappa$ is pointwise determined by $36$ real numbers \cite{Obu:2003}.
Under suitable conditions it follows that the area-metrics on $N$ are
in one-to-one correspondence with invertible skewon-free $2\choose
2$-tensors on $N$.  (See \cite{FavaroBergamin:2011} and Propositions
\ref{theorem:Decomp} and \ref {prop:correspondence} below).
Because of this correspondence, the normal form theorem in
\cite{Schuller:2010} can, of course, be stated also for skewon-free
$2\choose 2$-tensors.
The contribution of this paper we write down this restatement
explicitly, and also prove the result in this setting by following the
proof in \cite{Schuller:2010}.
Below, this is given by Theorem
\ref{thm:classification}.
However, we obtain a slightly different result. In
\cite{Schuller:2010}, area metrics divide into 23 metaclasses and each
metaclass has two representations in local coordinates, but in Theorem
\ref{thm:classification}, we obtain three different coordinate
representations for each metaclass. Moreover, for metaclasses I, II,
$\ldots$, VI, VII we show that only one coordinate representation is
needed per metaclass.


 A minor difference is also that in Theorem \ref{thm:classification},
 one does not need to assume that $\kappa$ is invertible. This was
 already noted in \cite{FavaroBergamin:2011}.

 This paper relies on computations by computer algebra. Mathematica
 notebooks for these computations can be found on the author's
 homepage.

 \section{Maxwell's equations}
 \label{mainSec}
 By a \emph{manifold} $M$ we mean a second countable topological Hausdorff
 space that is locally homeomorphic to $\setR^n$ with $C^\infty$-smooth
 transition maps. All objects are assumed to be smooth and real where defined.  
 Let $TM$ and $T^\ast M$ be the tangent and cotangent bundles,
 respectively, and for $k\ge 1$, let $\Lambda^k(M)$ be the set of
 antisymmetric $k$-covectors, so that $\Lambda^1(N)=T^\ast N$.  
 Also, let $\Lambda_k(M)$ be the set of antisymmetric $k$-vectors.
 Let $\Omega^k_l(M)$
 be $k\choose l$-tensors that are antisymmetric in their $k$ upper
 indices and $l$ lower indices. In particular, let $\Omega^k(M)$ be the
 set of $k$-forms. 
  Let $C^\infty(M)$ be the set of functions. 
 %
 %
 The Einstein summing convention is used throughout. When writing tensors 
 in local coordinates we assume that the components satisfy the same symmetries as
 the tensor. 

 \subsection{Maxwell's equations on a $4$-manifold}
 \label{sec:MaxOn4}
 \HOX{Section OK 5.5}
 Suppose $N$ is a $4$-manifold. 
 On a $4$-manifold $N$, Maxwell's equations read
 \begin{eqnarray}
 \label{max4A}
 dF &=& 0, \\
 \label{max4B}
 dG &=& j,
 \end{eqnarray}
 where $d$ is the exterior derivative on $N$, $F,G\in \Omega^2 (N)$, 
 and $j\in \Omega^3(N)$. 
 %
 By an \emph{electromagnetic medium} on $N$ 
 we mean a map 
 \begin{eqnarray*}
    \kappa \colon \Omega^2(N) &\to& \Omega^2(N).
 \end{eqnarray*} 
 We then say that $2$-forms $F,G\in \Omega^2(N)$ \emph{solve Maxwell's
   equations in medium $\kappa$} if  $F$ and $G$ satisfy equations
 \eqref{max4A}--\eqref{max4B} and
 \begin{eqnarray}
 \label{FGchi}
   G &=& \kappa(F).
 \end{eqnarray}
 Equation \eqref{FGchi} is known as the \emph{constitutive equation}.
 If $\kappa$ is invertible, 
 it follows that one can eliminate half of the free variables in
 Maxwell's equations \eqref{max4A}--\eqref{max4B}.
 We assume that $\kappa$ is linear and determined pointwise so that we can represent
 $\kappa$ by an antisymmetric $2\choose 2$-tensor $\kappa \in
 \Omega^2_2(N)$. If in coordinates $\{x^i\}_{i=0}^3$ for $N$ we have
 \begin{eqnarray}
 \label{eq:kappaLocal}
   \kappa &=& \frac 1 2 \kappa^{ij}_{lm} dx^l\otimes dx^m\otimes \pd{}{x^i}\otimes \pd{}{x^j}
 \end{eqnarray}
 and $F = F_{ij} dx^i \otimes dx^j$ and $G = G_{ij} dx^i \otimes dx^j$, 
 then constitutive equation \eqref{FGchi} reads
 \begin{eqnarray}
  \label{FGeq_loc}
    G_{ij} &=& \frac 1 2 \kappa_{ij}^{rs} F_{rs}.
 \end{eqnarray}

 \subsection{Decomposition of electromagnetic medium}
 \label{media:decomp}
 \HOX{Section OK 5.5}
 Let $N$ be a $4$-manifold. Then
 at each point on $N$, a general antisymmetric $2\choose
 2$-tensor depends on $36$ parameters. Such tensors canonically decompose
 into three linear subspaces. The motivation for this decomposition is
 that different components in the decomposition enter in different
 parts of electromagnetics.  See \cite[Section D.1.3]{Obu:2003}.  
 The below formulation is taken from \cite{Dahl:2009}.

 If $\kappa\in \Omega^2_2(N)$ we define the
 \emph{trace} of $\kappa$ as the smooth function $N\to \setR$ given by
 $\operatorname{trace} \kappa = \frac 1 2 \kappa_{ij}^{ij}$
 when $\kappa$ is locally given by equation \eqref{eq:kappaLocal}.
 Writing 
 $\operatorname{Id}$ as in equation \eqref{eq:kappaLocal} gives
 $\operatorname{Id}^{ij}_{rs}= \delta^i_r\delta^j_s-\delta^i_s\delta^j_r$, so 
 $\operatorname{trace}\operatorname{Id} = 6$ when $\dim N=4$. 

 \begin{proposition}[Decomposition of a  $2\choose 2$-tensors]
 \label{theorem:Decomp}
 Let $N$ be a $4$-manifold, and let
 \begin{eqnarray*}
 Z &=& \{ \kappa \in \Omega^2_2(N) : u\wedge \kappa(v) = \kappa(u)\wedge v \,\,\mbox{for all}\,\, u,v\in \Omega^2(N),\\
 & & \quad\quad\quad\quad\quad\quad \operatorname{trace} \kappa = 0\},\\
 W &=& \{ \kappa \in \Omega^2_2(N) : u\wedge \kappa(v) = -\kappa(u)\wedge v \,\,\mbox{for all}\,\, u,v\in \Omega^2(N)\} \\ 
  &=& \{ \kappa \in \Omega^2_2(N) : 
  u\wedge \kappa(v) = -\kappa(u)\wedge v \,\,\mbox{for all}\,\, u,v\in \Omega^2(N), \\
 & & \quad\quad\quad\quad\quad\quad \operatorname{trace} \kappa = 0 \}, \\
 U &=& \{ f \operatorname{Id}\in \Omega^2_2(N) : f\in C^\infty(N) \}.
 \end{eqnarray*}
 Then
 \begin{eqnarray}
 \label{AdecompSet}
  \Omega^2_2(N) &=& Z\,\,\oplus\,\, W \,\,\oplus\,\, U,
 \end{eqnarray}
 and pointwise, $\dim Z = 20$,  $\dim W = 15$ and  $\dim U = 1$.
 \end{proposition}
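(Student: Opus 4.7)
The statement is pointwise, so fix $p\in N$ and work in the $36$-dimensional vector space $V = (\Omega^2_2)_p$. The core idea is to identify $V$ with bilinear forms on $\Lambda^2 T_p^*N$ and decompose the latter into symmetric, antisymmetric, and trace parts.

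\emph{Step 1 (reformulation as bilinear forms).} For $\kappa\in V$ let $B_\kappa\colon \Lambda^2 T_p^*N\times\Lambda^2 T_p^*N\to \Lambda^4 T_p^*N$ be given by $B_\kappa(u,v) = u\wedge \kappa(v)$. Since the wedge pairing on $\Lambda^2 T_p^*N$ with values in the $1$-dimensional space $\Lambda^4 T_p^*N$ is non-degenerate, $\kappa\mapsto B_\kappa$ is injective, and after fixing a nonzero $\omega\in \Lambda^4 T_p^*N$ it identifies $V$ with the $36$-dimensional space of scalar bilinear forms on the $6$-dimensional space $\Lambda^2 T_p^*N$. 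Writing $u\wedge v = g(u,v)\,\omega$ defines a non-degenerate symmetric bilinear form $g$ (symmetric because $u\wedge v = v\wedge u$ for $2$-forms on a $4$-manifold), and $B_\kappa(u,v)=g(u,\kappa(v))\,\omega$.

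\emph{Step 2 (symmetric/antisymmetric split).} Bilinear forms on a $6$-dimensional space decompose into symmetric (dimension $21$) and antisymmetric (dimension $15$) summands. Using $u\wedge v = v\wedge u$, $B_\kappa$ is symmetric if and only if $u\wedge\kappa(v)=\kappa(u)\wedge v$, and antisymmetric if and only if $u\wedge\kappa(v)=-\kappa(u)\wedge v$. Let $P\subset V$ denote the preimage of the symmetric forms; it has dimension $21$, the first characterisation of $W$ is the preimage of the antisymmetric forms and has dimension $15$, and $V = P\oplus W$.

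\emph{Step 3 (splitting $P$ by the trace).} Substitution gives $u\wedge\operatorname{Id}(v) = u\wedge v = \operatorname{Id}(u)\wedge v$, so $\operatorname{Id}\in P$, and $\operatorname{trace}\operatorname{Id} = 6$ shows that $U = \setR\operatorname{Id}\subset P$ and that $\operatorname{trace}|_P$ is surjective onto $\setR$. Hence $Z := \ker(\operatorname{trace}|_P)$ has dimension $20$ and $P = Z\oplus U$, yielding the required decomposition $V = Z\oplus W\oplus U$ with dimensions $20+15+1=36$.

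\emph{Step 4 (equivalence of the two characterisations of $W$).} Identifying $\kappa$ with the linear operator $\Lambda^2 T_p^*N\to \Lambda^2 T_p^*N$ induced by \eqref{FGeq_loc}, a short index computation using the antisymmetries of $\kappa^{ij}_{lm}$ shows that $\operatorname{trace}\kappa = \tfrac12\kappa^{ij}_{ij}$ coincides with the ordinary operator trace. For $\kappa\in W$, the identity $B_\kappa(u,v)=-B_\kappa(v,u)$ translates via Step 1 to $g(u,\kappa(v))=-g(\kappa(u),v)$, i.e.\ $\kappa$ is skew-adjoint with respect to $g$, and therefore $\operatorname{trace}\kappa = -\operatorname{trace}\kappa = 0$. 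This is exactly what is needed to see that the trace-free clause in the second formulation of $W$ is automatic. The main (mild) obstacle is this Step 4; the remainder is linear algebra built on the bilinear-form picture of Step 1.
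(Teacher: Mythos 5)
Your proof is correct, and in fact there is nothing in the paper to compare it against: Proposition \ref{theorem:Decomp} is stated without proof, the decomposition being quoted from \cite{Obu:2003} and \cite{Dahl:2009}. Your argument is the standard, self-contained one and it goes through: the map $\kappa\mapsto B_\kappa$ into $\Lambda^4$-valued bilinear forms is injective because the wedge pairing on $\Lambda^2 T^*_pN$ is symmetric and non-degenerate (its Gram matrix in the basis \eqref{eq:2basis} is the invertible matrix $H_{(2)}$ of equation \eqref{eq:Hdef2Eq}), hence bijective by the count $36=21+15$; the symmetric/antisymmetric split gives the $21$- and $15$-dimensional pieces; since $\operatorname{trace}\operatorname{Id}=6\neq 0$, the line $U$ splits off inside the symmetric part, leaving $\dim Z=20$; and your Step 4 correctly isolates and settles the only non-mechanical point, namely that antisymmetry of $B_\kappa$ forces $g$-skew-adjointness of $\kappa$ and hence $\operatorname{trace}\kappa=0$, so the trace-free clause in the second description of $W$ is redundant. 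Two small points you may wish to make explicit: first, the identity $\frac 1 2 \kappa^{ij}_{ij}=\sum_{I\in O}\kappa^I_I$, so the paper's trace is precisely the operator trace used in the skew-adjointness argument; second, since the three defining conditions are coordinate-free and the axion projection is $\frac 1 6(\operatorname{trace}\kappa)\operatorname{Id}$ with $\operatorname{trace}\kappa\in C^\infty(N)$, the pointwise decomposition automatically produces smooth sections, which is what the statement about $\Omega^2_2(N)=Z\oplus W\oplus U$ as spaces of fields requires.
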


 If we write a $\kappa\in \Omega^2_2(N)$ as
 \begin{eqnarray*}
   \kappa &=& \kappaI \,\,+ \,\,\kappaII\,\,+\,\,\kappaIII
 \end{eqnarray*}
 with $\kappaI\in Z$, $\kappaII\in W$, $\kappaIII\in U$, then we say that 
 $\kappaI$ is the \emph{principal part},
 $\kappaII$ is the \emph{skewon part},
 $\kappaIII$ is the \emph{axion part} of $\kappa$.


 \subsection{Representing $\kappa$ as a $6\times 6$ matrix}
 \label{sec:Rep6x6}
 \HOX{Section 2.3 OK: 9.5.2011}
 Let $O$ be the ordered set of index pairs $\{ 01, 02, 03$, $23, 31, 12\}$. 
 If $I\in O$, let us also denote the individual indices by $I_1$ and $I_2$. Say,
 if $I=31$ then $I_2=1$. 

 If $\{x^i\}_{i=0}^3$ are local coordinates for a $4$-manifold $N$, and
 $J\in O$ we define $dx^J = dx^{J_1}\wedge dx^{J_2}$.  
 A basis for $\Omega^2(N)$
 is given by $\{ dx^{J}: J\in O\}$, that is,
 \begin{eqnarray}
 \label{eq:2basis}
 \{ dx^0\wedge dx^1, dx^0\wedge dx^2, dx^0\wedge dx^3, dx^2\wedge dx^3, dx^3\wedge dx^1, dx^1\wedge dx^2\}.
 \end{eqnarray}
 This choice of basis follows \cite[Section A.1.10]{Obu:2003} and
 \cite{FavaroBergamin:2011}.

 If $\kappa\in \Omega^2_2(N)$ is written as in equation
 \eqref{eq:kappaLocal} and $J\in O$, then
 \begin{eqnarray}
 \label{eq:kappaMatDef}
 \kappa(dx^{J}) = \sum_{I\in O} \kappa^J_I dx^I = \kappa^J_I dx^I,\quad J \in O,
 \end{eqnarray}
 where $\kappa^J_I = \kappa^{J_1 J_2}_{I_1 I_2}$ and in the last
 equality we have extended the Einstein summing convention also to
 elements in $O$. We will always use capital letters $I,J,K,\ldots$ to
 denote elements in $O$.

 Let $b$ be the natural bijection $b\colon O\to \{1,\ldots, 6\}$.  Then
 coefficients $\{\kappa^J_I : I,J\in O\}$ can be identified with a
 $6\times 6$ matrix.
 To do this
 identification in a systematic way, we denote by $(m(I,J) )_{IJ}$ the
 $6\times 6$ matrix whose entry at row $b(I)\in \{1,\ldots, 6\}$ and
 column $b(J)\in \{1, \ldots, 6\}$ is given by expression $m(I,J)$. For
 example, if $A$ is the $6\times 6$ matrix $A=(\kappa^J_I)_{IJ}$, then
 \begin{eqnarray}
 \label{eq:matrixCorr}
 \kappa^J_I &=& A_{b(I) b(J)},\quad I,J\in O.
 \end{eqnarray}
 If $\eta\in\Omega^2_2(N)$ and $B=(\eta^J_I)_{IJ}$, where $\eta^J_I$ represent
 $\eta\vert_p$ as in equation \eqref{eq:kappaMatDef}, then 
 $
 ( (\kappa\circ \eta)^J_I)_{IJ}= A B$
 .
 This compatibility with the matrix multiplication is the motivation for
 using the matrix representation for $\kappa$ as in 
 equation \eqref{eq:matrixCorr} (and not the transpose of $A$).

 Suppose $\{x^i\}_{i=0}^3$ and $\{\widetilde x^i\}_{i=0}^3$ are
 overlapping coordinates, and suppose that in these coordinates
 $\kappa$ is represented by $\kappa^J_I$ and $\widetilde \kappa^J_I$ as
 in equation \eqref{eq:kappaMatDef}.  Then we have the transformation
 rule
 \begin{eqnarray}
 \label{eq:kappaTransRule}
 \widetilde \kappa^J_I &=&  \pd{\widetilde x^J}{x^K} \kappa^K_L   \pd{x^L}{\widetilde x^I}, \quad I,J\in O,
 \end{eqnarray}
 where 
 \begin{eqnarray*}
 \pd{\widetilde x^I}{x^J} &=& \pd{\widetilde x^{I_1}}{x^{J_1}}\pd{\widetilde x^{I_2}}{x^{J_2}}-\pd{\widetilde x^{I_2}}{x^{J_1}}\pd{\widetilde x^{I_1}}{x^{J_2}}, \quad I,J\in O
 \end{eqnarray*}
 and $\pd{x^I}{\widetilde x^J}$ is defined analogously by exchanging $x$ and $\widetilde x$.
 For $I,J\in O$, we then have $\pd{\widetilde x^I}{x^K} \pd{x^K}{\widetilde x^J}
 =\delta^I_J$, where
 $
 \delta^I_J = \delta^{I_1}_{J_1}\delta^{I_2}_{J_2}-\delta^{I_2}_{J_1}\delta^{I_1}_{J_2}. 
 $ 


 \subsection{The Hodge star operator} 
 \label{sec:Hodge}
 \HOX{Section 2.5 OK: 9.5.2011}
 By a \emph{pseudo-Riemann metric} on a manifold $M$ we mean a
 symmetric $0\choose 2$-tensor $g$ that is non-degenerate. If $M$
 is not connected we also assume that $g$ has constant signature. If
 $g$ is positive definite we say that $g$ is a \emph{Riemann
 metric}. 

 Suppose $g$ is a pseudo-Riemann metric on an orientable manifold $M$
 with $n=\dim M\ge 1$.  For $p\in\{0,\ldots, n\}$, the \emph{Hodge star
 operator} $\ast$ is the linear map $\ast\colon \Omega^p(M)\to
 \Omega^{n-p}(M)$ defined as \cite[p. 413]{AbrahamMarsdenRatiu:1988}
 \begin{eqnarray*}
 \label{hodgedef}
 \ast(dx^{i_1} \wedge \cdots \wedge dx^{i_p}) &=& \frac{\sqrt{|\det g|}}{(n-p)!} g^{i_1 l_1}\cdots  g^{i_p l_p} \varepsilon_{l_1 \cdots l_p\, l_{p+1} \cdots l_n} dx^{l_{p+1}}\wedge \cdots \wedge  dx^{l_{n}},
 \end{eqnarray*}
 where $x^i$ are local coordinates in an oriented atlas,
 $g=g_{ij}dx^i\otimes dx^j$, $\det g= \det g_{ij} $, $g^{ij}$ is the
 $ij$th entry of $(g_{ij})^{-1}$, and $\varepsilon_{l_1\cdots l_n}$ is
 the \emph{Levi-Civita permutation symbol}. We treat
 $\varepsilon_{l_1\cdots l_n}$ as a purely combinatorial object (and
 not as a tensor density). We also define $\varepsilon^{l_1\cdots l_n}=
 \varepsilon_{l_1\cdots l_n}$.

 If $g$ is a pseudo-Riemann metric on an oriented
 $4$-manifold $N$, then the Hodge star operator for $g$ induces a
 $2\choose 2$-tensor $\kappa=\ast_g\in \Omega^2_2(N)$. If $\kappa$ is
 written as in equation \eqref{eq:kappaLocal} for local coordinates
 $x^i$ then
 \begin{eqnarray}
 \label{eq:hodgeKappaLocal}
 \kappa^{ij}_{rs} &=& \sqrt{\vert g\vert} g^{ia}g^{jb} \varepsilon_{abrs}
 \end{eqnarray}
 and $\kappa$ has only a principal part. See for example \cite[Proposition 2.2]{Dahl:2011:Closure}.

 Next we show that two pseudo-Riemann metrics can be combined by
 conjugation into a third pseudo-Riemann metric. 

 \begin{proposition}
 \label{prop:lorentzMixing}
 Suppose $g$ and $h$ are  pseudo-Riemann metrics 
 on an orientable $4$-dimensional manifold $N$. Then the
 pseudo-Riemann metric $k$ defined as 
 \begin{eqnarray*}
 \label{eq:kDef}
 k&=& g_{ia} h^{ab} g_{bj} dx^i\otimes dx^j
 \end{eqnarray*}
 satisfies
 \begin{eqnarray*}
 \label{eq:MixDefEqB}
 \ast_{k} &=& \operatorname{sgn}\left(\frac{\det g}{\det h}\right) \ast_{g}^{-1} \circ \ast_{h} \circ \ast_{g}.
 \end{eqnarray*}
 Conversely, if $\widetilde k$ is a pseudo-Riemann metric such
 that $\ast_{\widetilde k} = \lambda \ast_{g}^{-1} \circ \ast_{h} \circ \ast_{g}$ for some 
 $\lambda\in C^\infty(N)$, then $k$ and $\widetilde k$ are in the same conformal class.
 \end{proposition}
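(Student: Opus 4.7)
The proof of the forward assertion will be a direct coordinate calculation, while the converse reduces to a pointwise linear-algebra fact about Hodge stars and conformal classes in dimension $4$. For the forward direction, matrix multiplication gives $k^{ij} = g^{ia}h_{ab}g^{bj}$ and $\det k = (\det g)^2/\det h$, so $\sqrt{|\det k|} = |\det g|/\sqrt{|\det h|}$. On $2$-forms in dimension $4$ the Hodge star squares to a scalar, $\ast_g\circ\ast_g = \operatorname{sgn}(\det g)\,\operatorname{id}$, hence $\ast_g^{-1} = \operatorname{sgn}(\det g)\,\ast_g$, and the desired identity reduces to
\[
\ast_g\circ\ast_h\circ\ast_g \;=\; \operatorname{sgn}(\det h)\,\ast_k.
\]

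To prove this I apply \eqref{eq:hodgeKappaLocal} three times to expand $[\ast_g\,\ast_h\,\ast_g F]_{ij}$ for $F\in\Omega^2(N)$. The resulting expression carries three Levi-Civita symbols, four factors of $g^{-1}$ and two of $h^{-1}$. The determinant identity $g^{ap}g^{bq}\varepsilon_{pqcd} = \det(g^{-1})\,g_{cm}g_{dn}\varepsilon^{abmn}$, applied to the two outer $g^{-1}$-pairs, eliminates one $\varepsilon$-symbol and lowers four $g$-indices; the partial contraction $\varepsilon_{rsef}\varepsilon^{efpq} = 2(\delta^p_r\delta^q_s - \delta^p_s\delta^q_r)$ removes another $\varepsilon$; the remaining terms regroup as $g_{ia}h^{ab}g_{bj} = k_{ij}$. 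A second use of the determinant identity (now for $k$) raises these indices and yields the pattern $k^{ap}k^{bq}\varepsilon_{pqij}$ of $\ast_k F$, multiplied by a scalar $|g|\sqrt{|h|}\,\det(g^{-1})^2\det k$. Using $\det k = (\det g)^2/\det h$ this scalar collapses to $\operatorname{sgn}(\det h)\sqrt{|k|}$, completing the computation.

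For the converse, by the forward direction the hypothesis becomes $\ast_{\widetilde k} = \mu\,\ast_k$ with $\mu = \lambda\operatorname{sgn}(\det g/\det h)\in C^\infty(N)$, so it remains to show that two pseudo-Riemann metrics in dimension $4$ whose Hodge stars on $2$-forms are proportional must be conformally related. At a point I choose coordinates diagonalising $k$ with entries $\epsilon_i\in\{\pm1\}$ and contract the tensor equality $[\ast_{\widetilde k}]^{ij}_{rs} = \mu\,[\ast_k]^{ij}_{rs}$ against $\varepsilon^{rsmn}$ to strip the $\varepsilon$-symbols, obtaining
\[
\sqrt{|\widetilde k|}\bigl(\widetilde k^{im}\widetilde k^{jn} - \widetilde k^{in}\widetilde k^{jm}\bigr) \;=\; \mu\sqrt{|k|}\bigl(k^{im}k^{jn} - k^{in}k^{jm}\bigr).
\]
Reading off components in these adapted coordinates and using non-degeneracy of $\widetilde k$ forces $\widetilde k^{ij}$ to be diagonal with entries a common (necessarily smooth) scalar multiple of those of $k^{ij}$, hence $\widetilde k$ and $k$ are pointwise proportional.

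The main obstacle is the bookkeeping of signs and determinantal factors in the triple composition; verifying that the product $|g|\sqrt{|h|}\,\det(g^{-1})^2\det k$ collapses cleanly to $\operatorname{sgn}(\det h)\sqrt{|k|}$ is what ultimately produces the precise prefactor $\operatorname{sgn}(\det g/\det h)$ appearing in the statement.
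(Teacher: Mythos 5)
Your proposal is correct and follows essentially the same route as the paper: the identity is verified by expanding the Hodge stars in coordinates via equation \eqref{eq:hodgeKappaLocal} and the determinant identity $\varepsilon_{ijkl}A^{ia}A^{jb}A^{kc}A^{ld}=\varepsilon^{abcd}\det A$ together with the contraction \eqref{eq:varEpsContraction}, and your determinant bookkeeping (in particular $\det k=(\det g)^2/\det h$ and the collapse of $|g|\sqrt{|h|}\det(g^{-1})^2\det k$ to $\operatorname{sgn}(\det h)\sqrt{|k|}$) checks out. Your converse is exactly the paper's Lemma \ref{eq:hodgeConformal} (proportional Hodge stars imply conformally related metrics), proved the same way by contracting with $\varepsilon^{rsmn}$ in coordinates diagonalising one metric; the step you summarise as ``reading off components'' is where the paper supplies the detailed diagonality-and-proportionality argument of equations \eqref{eq:gghh1}--\eqref{eq:manualGB}.
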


 \begin{proof}
 Let $g_{ij}, h_{ij}$ and $k_{ij}$ be components for $g,h$ and $k$, respectively.
 Using $\varepsilon_{ijkl} A^{ia} A^{jb} A^{kc} A^{ld} = \varepsilon^{abcd} \det A$ we obtain
 obtain
 \begin{eqnarray*}
 \ast_k( dx^{i}\wedge dx^j) &=& \operatorname{sgn}(\det g) \frac{\vert\det h\vert^{-1/2}} 2 g^{ia} g^{jb} h_{ac} h_{bd}\varepsilon^{cdrs} g_{ru} g_{sv} dx^u\wedge dx^v
 \end{eqnarray*}
 Similarly writing out $\ast_g^{-1}\circ \ast_h\circ \ast_g$ gives the first claim.
 The second claim follows by the lemma below.
 \end{proof}

 The next lemma is a slight generalisation of Theorem 1 in \cite{Dray:1989}.

 \begin{lemma}
 \label{eq:hodgeConformal}
 Suppose $g$ and $h$ are pseudo-Riemann metrics on an orientable
 $4$-dimensional manifold $N$.  If $\ast_g = f \ast_h$ for some $f\in
 C^\infty(N)$, then $f=1$ and $g=\lambda h$ for some $\lambda \in
 C^\infty(N)$.
 \end{lemma}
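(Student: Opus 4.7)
The plan is to first show $f = \pm 1$ by an involutivity argument, then to rule out $f = -1$ by a signature-plus-algebraic analysis, and finally to invoke Dray's theorem \cite[Theorem~1]{Dray:1989} to conclude that $g$ and $h$ are conformal.

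For the first step, apply $\ast_g$ to both sides of $\ast_g = f \ast_h$. Using that $\ast_g$ is $C^\infty(N)$-linear and substituting the hypothesis a second time gives $\ast_g^2 = f^2 \ast_h^2$. In dimension $4$ on $2$-forms one has $\ast_g^2 = \operatorname{sgn}(\det g)\operatorname{Id}$ and similarly for $h$, so $f^2 = \operatorname{sgn}(\det g)/\operatorname{sgn}(\det h)$. Since $f$ is real valued, $\operatorname{sgn}(\det g) = \operatorname{sgn}(\det h)$ and hence $f = \pm 1$.

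For the second step, use the identity $\alpha\wedge\ast_g\alpha = g^{\Lambda^2}(\alpha,\alpha)\operatorname{vol}_g$, where $g^{\Lambda^2}$ denotes the bilinear form on $\Lambda^2 N$ induced by $g$. Since $g$ and $h$ share the orientation, $\operatorname{vol}_g = c\operatorname{vol}_h$ for some positive $c$, and the hypothesis yields the pointwise identity $c\,g^{\Lambda^2} = f\,h^{\Lambda^2}$ of symmetric bilinear forms on $\Lambda^2 N$. If $f = -1$ the two forms are related by a negative scalar, which is impossible by signature whenever $g$ is Riemannian (both $g^{\Lambda^2}, h^{\Lambda^2}$ have signature $(6,0)$) or neutral (both have signature $(2,4)$, and $(4,2)$ is not realised by any metric on a $4$-manifold). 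In the Lorentzian case both forms have signature $(3,3)$ and a short computation suffices: working at a point in coordinates where $g$ is diagonal, the off-diagonal conditions $h^{\Lambda^2}(dx^{01},dx^{02})=0$ and $h^{\Lambda^2}(dx^{01},dx^{12})=0$ give $h^{00}h^{12} = h^{01}h^{02}$ and $h^{01}h^{12} = h^{02}h^{11}$; combined with the diagonal identity $h^{00}h^{11} - (h^{01})^2 = c$ these force either $(h^{01})^2 = h^{00}h^{11}$ (which then gives $c=0$) or a reduction to $-(h^{00})^2 - (h^{01})^2 = c$, each contradicting $c>0$. Hence $f = 1$.

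With $f = 1$ the hypothesis reads $\ast_g = \ast_h$, and Dray's theorem yields $g = \lambda h$ for some $\lambda \in C^\infty(N)$. The main obstacle is excluding $f = -1$ in the Lorentzian subcase, where the $\Lambda^2 N$ signature argument alone is insufficient and a brief explicit computation in a $g$-diagonalising frame is necessary.
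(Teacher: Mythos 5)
Your strategy (square to get $f=\pm1$, compare the induced inner products on $\Lambda^2$, quote Dray--Kulkarni--Samuel for $f=1$) is genuinely different from the paper, which proves everything at once by the component computation leading to equation \eqref{eq:gghh1}; the signature triage for the definite and neutral cases is fine. However, the step that excludes $f=-1$ in the Lorentzian case has a genuine gap. With $g=\operatorname{diag}(1,-1,-1,-1)$ and $h^{\Lambda^2}=-c\,g^{\Lambda^2}$, the two off-diagonal conditions you list, $h^{00}h^{12}=h^{01}h^{02}$ and $h^{01}h^{12}=h^{02}h^{11}$, together with $h^{00}h^{11}-(h^{01})^2=c$, only give $h^{12}\bigl(h^{00}h^{11}-(h^{01})^2\bigr)=0$, hence $h^{12}=0$, after which they reduce to $h^{01}h^{02}=0=h^{02}h^{11}$. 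This is not contradictory: any diagonal $h$ with $h^{00}h^{11}=c$ satisfies all three equations, so the asserted alternative ``$-(h^{00})^2-(h^{01})^2=c$'' does not follow and no contradiction has been reached. To close the argument you must use more of the system $h^{\Lambda^2}=-c\,g^{\Lambda^2}$: the off-diagonal entries pairing each $dx^{0i}$ with the spatial basis $2$-forms force, by the same multiplication trick and $h^{00}h^{ii}-(h^{0i})^2=c\neq0$, that all spatial components $h^{12},h^{13},h^{23}$ vanish; then the three spatial diagonal conditions read $h^{11}h^{22}=h^{22}h^{33}=h^{11}h^{33}=-c$, and their product $(h^{11}h^{22}h^{33})^2=-c^3<0$ is the contradiction. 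This is essentially the computation the paper carries out (for all signatures simultaneously) from \eqref{eq:gghh1} onward.

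Two smaller points. First, your case list omits negative definite metrics, whose $\Lambda^2$ form also has signature $(6,0)$; the same argument applies, but it should be said. Second, for the conclusion you appeal to Dray's Theorem~1 at exactly the point where the paper gives a self-contained proof (the lemma is stated there as a slight generalisation of that theorem); this is legitimate provided the cited theorem covers arbitrary signature on the same oriented manifold, but note that the diagonalisation argument you need anyway for the Lorentzian exclusion already yields $g^{ii}=\sigma h^{ii}$, i.e.\ the conformal relation, so the citation can be avoided entirely.
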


 \begin{proof}
  Since we only need to prove the claim at one point, let
  $\{x^i\}_{i=0}^3$ be coordinates for a connected neighbourhood $U$
  around some $p\in N$ where $h\vert_p$ is diagonal with entries $\pm
  1$.
 Squaring $\ast_g = f \ast_h$ gives $f^2 = 1$, so in $U$ we have either
 $f=1$ or $f=-1$. 
 By equation \eqref{eq:hodgeKappaLocal},
 \begin{eqnarray}
 \label{eq:gghh}
 \sqrt{\vert \det g\vert} g^{ia} g^{jb} \varepsilon_{abrs} &=& f   \sqrt{\vert \det h\vert} h^{ia} h^{jb} \varepsilon_{abrs}.
 \end{eqnarray}
 Contracting by $\varepsilon^{mnrs}$ and using equation \eqref{eq:varEpsContraction} gives
 \begin{eqnarray}
 \label{eq:gghh1}
 \sqrt{\vert \det g\vert} \left( g^{ij} g^{kl} -g^{ik} g^{jl} \right)
 &=& f   \sqrt{\vert \det h\vert} \left( h^{ij} h^{kl} -h^{ik} h^{jl} \right)
 \end{eqnarray}
 for all $i,j,k,l\in \{0,1,2,3\}$. Thus,
 if we have
 neither [$i=j$ and $k=l$] nor [$i=k$ and $j=l$], then
 \begin{eqnarray}
 \label{eq:gComm123}
 g^{ij} g^{kl} &=&g^{ik} g^{jl}. 
 \end{eqnarray}
 Thus, if $i,j,k,l$ are distinct, then 
 \begin{eqnarray}
 \nonumber
 \left(  g^{ii}   g^{jj}- (g^{ij})^2\right) (g^{kl})^2 &=& 
 \left( g^{ii} g^{kl}\right)
 \left( g^{jj} g^{kl}\right)
 -
 \left( g^{ij} g^{kl}\right)
 \left( g^{ij} g^{lk}\right)\\
 &=&
 \nonumber
 \left( g^{ik} g^{il}\right)
 \left( g^{jk} g^{jl}\right)
 -
 \left( g^{ik} g^{jl}\right)
 \left( g^{il} g^{jk}\right)\\
 &=& 0,
 \label{eq:gggComm}
 \end{eqnarray}
 Combining equations \eqref{eq:gghh1} and \eqref{eq:gggComm} gives
 $(h^{ii} h^{jj} - (h^{ij})^2) (g^{kl})^2 = 0$. Hence $g$ is also diagonal at $p$.
 Equation \eqref{eq:gghh1} gives
 \begin{eqnarray} 
 \label{eq:manualGB}
 \sqrt{\vert \det g\vert}  g^{ii} g^{jj}  &=& f   \sqrt{\vert \det h\vert} h^{ii} h^{jj}, \quad i<j.
 \end{eqnarray}
 Writing out equation \eqref{eq:manualGB} for cases $(i,j)= (0,3),
 (1,3)$ and $(0,1)$ and multiplying the first two equations gives $
 \sqrt{\vert \det g\vert} (g^{33})^2 = f \sqrt{\vert \det h\vert}
 (h^{33})^2$.  Thus $f=1$ in $U$ and $g^{33} = \sigma h^{33}$ for some
 $\sigma \in \{\pm \left(\frac{\vert \det h\vert}{\vert
    \det g\vert}\right)^{1/4}\}$.  Setting $j=3$ in equation
 \eqref{eq:manualGB} then gives $g^{ii} = \sigma h^{ii}$ for $i\in \{0,1,2\}$.
 \end{proof}

 \subsection{Area metrics}
 \HOX{Section 2.4 OK}
 As described in the introduction, 
 an area metric is a geometry that at each point $p$ gives a
 (possible indefinite) inner product for bivectors, that is, for
 elements in $\Lambda_2(N)\vert_p$. 
 In this section we show  that area
 metrics are essentially in one-to-one correspondence with skewon-free
 $2\choose 2$-tensors.

 \begin{definition}
 Suppose $N$ is a $4$-manifold. An \emph{area metric}
 is a $0\choose 4$-tensor $G$ on $N$ such that
 \begin{enumerate}
 \item $G(u,v,p,q)$ is antisymmetric in $u,v$,
 \item For each $p\in N$, the quadratic form
 \begin{eqnarray*}
   \Lambda_2(N)\vert_p\,\times\, \Lambda_2(N)\vert_p&\to& \setR
 \end{eqnarray*}
 determined by 
 \begin{eqnarray*}
    (a\wedge b, u\wedge v )&\mapsto& G(a,b,u,v), \quad a,b,u,v\in \Lambda^1_p(N)
 \end{eqnarray*}
 is symmetric and non-degenerate.
 \end{enumerate}
 \end{definition}

 Suppose $G$ is a $0\choose 4$-tensor on a $4$-manifold. Then in local coordinates $\{x^i\}_{i=0}^3$, 
 \begin{eqnarray*}
  G = G_{ijkl} dx^i \otimes dx^j \otimes dx^k \otimes dx^l,
 \end{eqnarray*}
 and $G$ is an area metric if and only if components $G_{ijkl}$ satisfy \emph{(i)}
 $G_{ijrs} = -G_{jirs}$, \emph{(ii)} $G_{ijrs} = G_{rsij}$ and \emph{(iii)} the $6\times 6$
 matrix $(G_{I_1 I_2 J_1 J_2})_{IJ}$ is invertible.

 \begin{proposition}
 \label{prop:correspondence}
 Suppose $N$ is an orientable $4$-manifold and 
 $g$ is a pseudo-Riemann metric on $N$.
 Let
 $A$ be the map that maps a $\kappa\in \Omega^2_2(N)$ into the 
 $0\choose 4$-tensor 
 \begin{eqnarray*}
 A(\kappa)
 &=& \frac 1 2 \vert \det g\vert^{1/2} \kappa^{rs}_{ij} \varepsilon_{rskl} dx^i\otimes dx^j\otimes dx^k\otimes dx^l,
 \end{eqnarray*}
 where $\kappa^{ij}_{kl}$ are defined as in equation \eqref{eq:kappaLocal}. Then
 $\kappa\mapsto A(\kappa)$ is invertible,
 and if $\kappa$ is invertible as a linear map $\kappa\colon \Omega^2(N)\to \Omega^2(N)$, 
 then
 $\kappa$ is skewon-free if and only if $A(\kappa)$ is an area metric.
 \end{proposition}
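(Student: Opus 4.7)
The plan is to reduce everything to the $6\times 6$ matrix calculus of Section \ref{sec:Rep6x6}. Fix a local chart $\{x^i\}_{i=0}^3$ and let $K$ be the $6\times 6$ matrix $K_{IJ}=\kappa^J_I$ attached to $\kappa$ via equation \eqref{eq:matrixCorr}. Any $(0,4)$-tensor that is antisymmetric in its first and in its last index pair is similarly encoded by a $6\times 6$ matrix $G_{IJ}:=G_{I_1 I_2 J_1 J_2}$; since
\[
A(\kappa)_{ijkl}=\tfrac12\, |\det g|^{1/2}\kappa^{rs}_{ij}\varepsilon_{rskl}
\]
is manifestly antisymmetric in $(i,j)$ (from $\kappa$) and in $(k,l)$ (from $\varepsilon$), it lies in this class. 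The key auxiliary object is the wedge-pairing matrix $\eta$ defined by $dx^I\wedge dx^J=\eta_{IJ}\,dx^0\wedge dx^1\wedge dx^2\wedge dx^3$ for $I,J\in O$; equivalently $\eta_{IJ}=\varepsilon_{I_1 I_2 J_1 J_2}$. A direct permutation count using the ordering in equation \eqref{eq:2basis} gives $\eta=\left(\begin{smallmatrix}0&I_3\\ I_3&0\end{smallmatrix}\right)$, so $\eta$ is symmetric, invertible, and satisfies $\eta^2=I_6$.

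Using that $\kappa^{rs}_{ij}$ is antisymmetric in $r,s$ (so the $\tfrac12$ collapses the sum over all $r,s$ into a sum over the antisymmetric index set $O$), I obtain the clean matrix identity
\[
A(\kappa)_{IJ}=|\det g|^{1/2}\,(K\eta)_{IJ}.
\]
Since multiplication by $\eta$ is a linear bijection on $6\times 6$ matrices and $|\det g|^{1/2}\neq 0$, the assignment $\kappa\mapsto A(\kappa)$ is then a linear bijection from $\Omega^2_2(N)$ onto the space of $(0,4)$-tensors with the stated double pair antisymmetry; this settles the first claim, and an explicit inverse can also be written by contracting with $\varepsilon^{ijkl}$.

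For the equivalence under the invertibility hypothesis, I first translate skewon-freeness into matrix form. The characterizations in Proposition \ref{theorem:Decomp} force $Z\oplus U=\{\kappa\in\Omega^2_2(N):u\wedge\kappa(v)=\kappa(u)\wedge v\text{ for all }u,v\in\Omega^2(N)\}$, so $\kappa$ is skewon-free iff this wedge symmetry holds. Writing bivectors as column vectors in the basis $\{dx^I\}_{I\in O}$, one has $u\wedge v=(u^T\eta v)\,dx^0\wedge dx^1\wedge dx^2\wedge dx^3$ and $\kappa(v)=Kv$, so the symmetry becomes $\eta K=K^T\eta$; using $\eta^2=I_6$ this is equivalent to $K\eta$ being a symmetric matrix, which by the previous display is exactly the area-metric symmetry $A(\kappa)_{IJ}=A(\kappa)_{JI}$. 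The remaining non-degeneracy condition for an area metric corresponds to invertibility of $K\eta$, which is equivalent to invertibility of $K$, i.e., to $\kappa$ being invertible as $\Omega^2(N)\to\Omega^2(N)$; this is precisely the hypothesis, so both implications close. The only real bookkeeping step is the identification $\eta_{IJ}=\varepsilon_{I_1 I_2 J_1 J_2}$ together with $\eta^2=I_6$, both short combinatorial checks using the explicit order of $O$.
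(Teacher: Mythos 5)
Your proof is correct and follows essentially the same route as the paper: the paper's own (very terse) argument is exactly that the summation over $r,s$ amounts to matrix multiplication by the invertible $\varepsilon$-matrix $B=H_{(2)}$ (your $\eta$), with the contraction identity $\varepsilon^{ijkl}\varepsilon_{ijrs}=2(\delta^k_r\delta^l_s-\delta^k_s\delta^l_r)$ supplying the inverse, so your explicit identity $A(\kappa)_{IJ}=|\det g|^{1/2}(K\eta)_{IJ}$ and the equivalence ``skewon-free $\Leftrightarrow$ $K\eta$ symmetric'' is the same argument spelled out. The only point you leave implicit, which the paper notes as a direct computation, is that the coordinate formula for $A(\kappa)$ actually defines a $0\choose 4$-tensor, i.e.\ the transformation-rule check for $|\det g|^{1/2}\varepsilon_{rskl}$ in an oriented atlas; your chart-fixed argument handles everything else.
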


 \begin{proof}
 A direct computation shows that $A(\kappa)$ is a tensor, and the identity
 \begin{eqnarray}
 \label{eq:varEpsContraction}
 \varepsilon^{ijkl}\varepsilon_{ijrs} = 2(\delta^k_r \delta^l_s - \delta^k_s \delta^l_r )
 \end{eqnarray}
 shows that $A$ is invertible. Using equation
 \eqref{eq:varEpsContraction} we also see that for the last equivalence
 we only need to show that $(\kappa^{rs}_{I_1 I_2} \varepsilon_{rs J_1
  J_2})_{IJ}$ is an invertible $6\times6$ matrix.  The result follows
 since the summation over $r,s$ can be written as a matrix
 multiplication.
 \end{proof}

 \section{The normal form theorem restated for  $2\choose 2$-tensors}
 \label{sec:classification22}
 In this section we formulate Theorem \ref{thm:classification}, which
 provides the restatement of the normal form theorem in
 \cite{Schuller:2010}. 
 First we introduce some terminology and
 notation.  Suppose $L\colon V\to V$ is a linear map where $V$ is a
 real $n$-dimensional vector space.  If the matrix representation of
 $L$ in some basis is $A\in \setR^{n\times n}$ and $A$ is written as in
 Theorem \ref{thm:jordan}, then we say that $L$ has \emph{Segre type}
 $\left[m_1\cdots m_r\, k_{1}\overline{k_{1}}\cdots
  k_{s}\overline{k_{s}}\right]$.  Moreover, by Theorem
 \ref{thm:jordan}, the Segre type depends only on $L$ and not on the
 basis.
 If $\kappa\in \Omega^2_{2}(N)\vert_p$, where $N$ is a $4$-manifold and
 $p\in N$, then we say that the Segre type of $\kappa\vert_p$ is the
 Segre type of the linear map $\kappa\vert_p\colon \Omega^2(N)\vert_p\to
 \Omega^2_p(N)\vert_p$.  By counting how many ways a $6\times 6$ matrix can be
 partitioned into blocks as in equation \eqref{eq:jordanDecomp}, we see
 that there are 23 possible Segre types for $\kappa\vert_p$. These are
 the Segre types listed in Theorem \ref{thm:classification}, that is,
 \begin{eqnarray}
 \label{eq:SegreTypes}
 [1\overline 1\, 1\overline 1\, 1\overline 1],\quad
 [2\overline 2\, 1\overline 1],\quad
 [3\overline 3],\quad 
 \cdots\quad \quad
 [321], \quad
 [31\, 1\overline 1],\quad
 [31\, 11]. 
 \end{eqnarray}

 To formulate Theorem \ref{thm:classification} we need the following definition. 
 \begin{definition}
 \label{def:simNotation}
 Suppose $N$ is a $4$-dimensional manifold, 
 $\kappa\in \Omega^2_2(N)$, $p\in N$ and 
 $V\in \setR^{6\times 6}$. We then write
 \begin{eqnarray}
 \label{eq:simNotation}
 \kappa\vert_p &\sim& V
 \end{eqnarray} 
 to indicate that there exist coordinates $\{x^i\}_{i=0}^3$ around $p$  for which  
 at least one of the below conditions is satisfied:
 \begin{enumerate}
 \item 
 \label{def:Sim:A}
   $(\kappa^J_I)_{IJ}=V$.
 \item 
 \label{def:Sim:B} 
 For the Riemann metric 
 $g = \operatorname{diag}(1,1,1,1)$ in coordinates $\{x^i\}_{i=0}^3$ 
 we have
 \begin{eqnarray*}
 ((\ast_g\circ \kappa\circ \ast_g)^J_I)_{IJ}&=&V.
 \end{eqnarray*}
 \item 
 \label{def:Sim:C} 
 For the pseudo-Riemann metric 
 $g = \operatorname{diag}(1,-1,-1,1)$ in coordinates $\{x^i\}_{i=0}^3$ 
 we have
 \begin{eqnarray*}
 ((\ast_g\circ \kappa\circ \ast_g)^J_I)_{IJ}&=&V.
 \end{eqnarray*}
 \end{enumerate}
 In the above $\eta^J_I$ denote the components as in equation
 \eqref{eq:kappaMatDef} that determine $\eta\vert_p$ in coordinates
 $\{x^i\}_{i=0}^3$ when $\eta\in \Omega^2_2(N)$.
 \end{definition} 

 Let us make three remarks regarding Definition \ref{def:simNotation}.
 First, 
 for the metrics in conditions \ref{def:Sim:B} and \ref{def:Sim:C} we 
 have $\ast_g=\ast_g^{-1}$, so the operations in 
 \ref{def:Sim:B} and \ref{def:Sim:C} are just conjugation by a Hodge star 
 operator. Proposition \ref{prop:lorentzMixing} shows that this is a 
 natural operator in the sense that in the class of pseudo-Riemann metrics, 
 the operation is closed (up to a sign depending on signature).
 Second, if $g=\operatorname{diag}(1,1,1,1)$ and if we use the
 correspondence in Proposition \ref{prop:correspondence}, then
 conjugation $\kappa\mapsto \ast_g^{-1} \circ \kappa \circ \ast_g$ for
 invertible skewon-free $2\choose 2$-tensors corresponds to conjugation
 $G\mapsto \Sigma^t\cdot G\cdot \Sigma$ for area metrics in
 \cite[Theorem 4.10]{Schuller:2010}.
 Second, 
 Proposition \ref{simChar} in Appendix \ref{app:22tensor} gives two  
 alternative descriptions for the property $\kappa\vert_p \sim V$. 
 Third, conditions \ref{def:Sim:A}, \ref{def:Sim:B} and
 \ref{def:Sim:C} are not mutually exclusive.  
 If $\kappa =
 \operatorname{Id}$ then all conditions are equivalent.

 \newcommand{\rr}[0]{\alpha}
 \newcommand{\pos}[0]{\beta}
 \newcommand{\sqr}[0]{\sqrt{2}}

 \begin{theorem}
 \label{thm:classification} 
 Let $\kappa\in \Omega^2_2(N)$, and suppose that
 $\kappaII\vert_p=0$ for some $p\in N$. Then $\kappa\vert_p \sim V$
 for a matrix $V\in \setR^{6\times 6}$ from the below list of matrices
 (listed with Segre type).
 Moreover, if  $\kappa\vert_p$ has Segre type $I, II,\ldots, VI, VII$, then 
 we may assume that $\kappa\vert_p \sim V$ holds with alternative \ref{def:Sim:A} in 
 Definition \ref{def:simNotation}.
 \begin{itemize}
 \item Metaclass I: $[1\overline 1\, 1\overline 1\, 1\overline 1]$
 \begin{eqnarray*}
 \begin{pmatrix}
 \rr_1 & 0 & 0 & -\pos_1 & 0 &0 \\ 
 0 & \rr_2 & 0 & 0  &-\pos_2 & 0 \\
 0 & 0 &  \rr_3 &0 & 0 &-\pos_3  \\
 \pos_1 & 0 &  0 & \rr_1 & 0 &0  \\
 0 & \pos_2 & 0 & 0  &\rr_2 & 0 \\
 0 & 0 & \pos_3 & 0 & 0 &\rr_3 
 \end{pmatrix}
 \end{eqnarray*}

 \item Metaclass II: $[2\overline 2\, 1\overline 1]$
 \begin{eqnarray*}
 \begin{pmatrix}
 \rr_1 &       -\pos_1       &    0 & 0 & 0 & 0 \\
 \pos_1      &    \rr_1    &    0&  0  & 0 & 0 \\
 0              &                0 &  \rr_2 & 0 &0 &-\pos_2 \\
 0              &                1 & 0 & \rr_1 & \pos_1 &0  \\
 1              &                0 & 0 & -\pos_1  &\rr_1 & 0 \\
 0              &                0 & \pos_2 & 0 & 0 &\rr_2 
 \end{pmatrix}
 \end{eqnarray*}
 \item Metaclass III: $[3\overline 3]$
 \begin{eqnarray*}
 \begin{pmatrix}
 \rr_1         &       -\pos_1       &    0 & 0 & 0 & 0 \\
 \pos_1      &    \rr_1           &    0&  0  & 0 & 0 \\
 1              &                0    &  \rr_1 & 0 &0 &-\pos_1 \\
 0              &                0    & 0 & \rr_1 & \pos_1 &1  \\
 0              &                0    & 1 & -\pos_1  &\rr_1 & 0 \\
 0              &                1    & \pos_1 & 0 & 0 &\rr_1
 \end{pmatrix}
 \end{eqnarray*}
 \item Metaclass IV:  $[11\, 1\overline 1\, 1\overline 1]$
 \begin{eqnarray*}
 \begin{pmatrix}
 \rr_1 & 0 & 0 & -\pos_1 & 0 &0 \\
 0 & \rr_2 & 0 & 0  &-\pos_2 & 0 \\
 0 & 0 &  \rr_3 & 0 & 0 &\rr_4  \\
 \pos_1 & 0 &  0 & \rr_1 & 0 &0  \\
 0 & \pos_2 & 0 & 0  &\rr_2 & 0 \\
 0 & 0 & \rr_4 & 0 & 0 &\rr_3
 \end{pmatrix}
 \end{eqnarray*}

 \item Metaclass V:  $[11\, 2\overline 2]$
 \begin{eqnarray*}
 \begin{pmatrix}
 \rr_1 &       -\pos_1           &           0 & 0 & 0 & 0 \\
 \pos_1      &    \rr_1           &           0&  0  & 0 & 0 \\
 0              &                0     &     \rr_2 & 0 &0 &\alpha_3 \\
 0              &                1    &           0 & \rr_1 & \pos_1 &0  \\
 1              &                0    &          0 & -\pos_1  &\rr_1 & 0 \\
 0              &                0    &  \alpha_3 & 0 & 0 &\rr_2 
 \end{pmatrix}
 \end{eqnarray*}
 \item Metaclass VI: $[11\,11\,1\overline 1]$
 \begin{eqnarray*}
 \begin{pmatrix}
 \rr_1 & 0 & 0 & -\pos_1 & 0 &0 \\
 0 & \rr_2 & 0 & 0  &\rr_4 & 0 \\
 0 & 0 &  \rr_3 & 0 & 0 &\rr_5  \\
 \pos_1 & 0 &  0 & \rr_1 & 0 &0  \\
 0 & \rr_4 & 0 & 0  &\rr_2 & 0 \\
 0 & 0 & \rr_5 & 0 & 0 &\rr_3 
 \end{pmatrix}
 \end{eqnarray*}

 \item Metaclass VII: $[1 1\, 1 1\, 1 1]$
 \begin{eqnarray*}
 \begin{pmatrix}
 \rr_1 & 0 & 0 & \rr_4 & 0 &0 \\
 0 & \rr_2 & 0 & 0  &\rr_5 & 0 \\
 0 & 0 &  \rr_3 & 0 & 0 &\rr_6  \\
 \rr_4 & 0 &  0 & \rr_1 & 0 &0  \\
 0 & \rr_5 & 0 & 0  &\rr_2 & 0 \\
 0 & 0 & \rr_6 & 0 & 0 &\rr_3 
 \end{pmatrix}
 \end{eqnarray*}
 \item Metaclass VIII: $[6]$ $\quad \epsilon_1\in \{\pm 1\}$
 \begin{eqnarray*}
 \begin{pmatrix}
 \rr_1 & 0 & 0 & 0 & 0 & 0 \\
 1 & \rr_1 & 0 & 0  & 0 & 0 \\
 0 & 1 &  \rr_1 & 0 &0 &0 \\
 0 & 0 & 0 & \rr_1 & 1 &0  \\
 0 & 0 & 0 & 0  &\rr_1 & 1 \\
 0 & 0 & \epsilon_1 & 0 & 0 &\rr_1 
 \end{pmatrix}
 \end{eqnarray*}
 \item Metaclass IX: $[42]$ $\quad \epsilon_1, \epsilon_2\in \{\pm 1\}$
 \begin{eqnarray*}
 \begin{pmatrix}
 \rr_1 & 0 & 0 & 0 & 0 &0\\
 1 & \rr_1 & 0 & 0  & 0 & 0 \\
 0 & 0 &  \rr_2 &  0 &  0 &0  \\
 0 & 0 & 0 & \rr_1 & 1 &0  \\
 0 & \epsilon_1 & 0 & 0  &\rr_1 & 0 \\
 0 & 0 & \epsilon_2 & 0 & 0 &\rr_2 
 \end{pmatrix}
 \end{eqnarray*}
 \item  Metaclass X: $[4\, 1\overline 1]$ $\quad \epsilon_1\in \{\pm 1\}$
 \begin{eqnarray*}
 \begin{pmatrix}
 \rr_2 & 0 & 0 & 0 & 0 & 0\\
 1 & \rr_2 & 0 & 0  & 0 & 0 \\
 0 & 0 &  \rr_1 & 0 &0 &-\pos_1 \\
 0 & 0 & 0 & \rr_2 & 1 &0  \\
 0 & \epsilon_1 & 0 & 0  &\rr_2 & 0 \\
 0 & 0 & \pos_1 & 0 & 0 &\rr_1 
 \end{pmatrix}
 \end{eqnarray*}
 \item Metaclass XI: $[4 1 1]$ $\quad \epsilon_1\in \{\pm 1\}$
 \begin{eqnarray*}
 \begin{pmatrix}
 \rr_1 & 0      & 0 & 0 & 0 & 0 \\
 1     & \rr_1 & 0&  0  & 0 & 0 \\
 0              & 0       &  \rr_2 & 0 &0 &\rr_3 \\
 0              &                0 & 0 & \rr_1 & 1 &0  \\
 0              &                \epsilon_1 & 0 & 0  &\rr_1 & 0 \\
 0              &                0 & \rr_3 & 0 & 0&\rr_2
 \end{pmatrix}
 \end{eqnarray*}
 \item Metaclass XII: $[2\, 2\overline 2]$  $\quad \epsilon_1\in \{\pm 1\}$
 \begin{eqnarray*}
 \begin{pmatrix}
 \rr_1 &       -\pos_1       &    0 & 0 & 0 & 0 \\
 \pos_1      &    \rr_1    &    0&  0  & 0 & 0 \\
 0              &                0 &  \rr_2 & 0 &0 & 0 \\
 0              &                1 & 0 & \rr_1 & \pos_1 &0  \\
 1              &                0 & 0 & -\pos_1  &\rr_1 & 0 \\
 0              &                0 & \epsilon_1 & 0 & 0 &\rr_2 
 \end{pmatrix}
 \end{eqnarray*}
 \item Metaclass XIII: $[22 2]$  $\quad \epsilon_1,\epsilon_2, \epsilon_3\in \{\pm 1\}, \epsilon_1\le \epsilon_2\le \epsilon_3$
 \begin{eqnarray*}
 \begin{pmatrix}
 \rr_1 & 0 & 0 & 0 & 0 &0 \\
 0 & \rr_2 & 0 & 0  &\epsilon_2 & 0 \\
 0 & 0 &  \rr_3 & 0 & 0 & 0 \\
 \epsilon_1 & 0 &  0 & \rr_1 & 0 &0  \\
 0 & 0 & 0 & 0  &\rr_2 & 0 \\
 0 & 0 & \epsilon_3 & 0 & 0 &\rr_3 
 \end{pmatrix}
 \end{eqnarray*}
 \item Metaclass XIV: $[22\, 1\overline 1]$ $\quad \epsilon_1,\epsilon_2 \in \{\pm 1\}, \epsilon_1\le \epsilon_2$
 \begin{eqnarray*}
 \begin{pmatrix}
 \rr_3 & 0 & 0 & 0 & 0 &0 \\
 0 & \rr_2 & 0 & 0  &\epsilon_2 & 0 \\
 0 & 0 &  \rr_1 & 0 & 0 &-\pos_1 \\
 \epsilon_1 & 0 &  0 & \rr_3 & 0 &0  \\
 0 & 0 & 0 & 0  &\rr_2 & 0 \\
 0 & 0 & \pos_1 & 0 & 0 &\rr_1 
 \end{pmatrix}
 \end{eqnarray*}
 \item Metaclass XV: $[22\, 1 1]$ $\quad \epsilon_1,\epsilon_2 \in \{\pm 1\}, \epsilon_1\le \epsilon_2$
 \begin{eqnarray*}
 \begin{pmatrix}
 \rr_1 & 0 & 0 & 0 & 0 &0 \\
 0 & \rr_2 & 0 & 0  &\epsilon_1 & 0 \\
 0 & 0 &  \rr_3 & 0 & 0 &\rr_4 \\
 \epsilon_2 & 0 &  0 & \rr_1 & 0 &0  \\
 0 & 0 & 0 & 0  &\rr_2 & 0 \\
 0 & 0 & \rr_4 & 0 & 0 &\rr_3 
 \end{pmatrix}
 \end{eqnarray*}
 \item Metaclass XVI: $[2\, 1\overline 1\, 1 \overline1]$  $\quad \epsilon_1\in \{\pm 1\}$
 \begin{eqnarray*}
 \begin{pmatrix}
 \rr_3 & 0 & 0 & 0 & 0 &0 \\
 0 & \rr_1 & 0 & 0  &-\pos_1 & 0 \\
 0 & 0 &  \rr_2 & 0 & 0 &-\pos_2 \\
 \epsilon_1 & 0 &  0 & \rr_3 & 0 &0  \\
 0 & \pos_1 & 0 & 0  &\rr_1 & 0 \\
 0 & 0 & \pos_2 & 0 & 0 &\rr_2 
 \end{pmatrix}
 \end{eqnarray*}
 \item Metaclass XVII: $[21 1\, 1 \overline1]$ $\quad \epsilon_1\in \{\pm 1\}$
 \begin{eqnarray*}
 \begin{pmatrix}
 \rr_2 & 0 & 0 & 0 & 0 &0 \\
 0 & \rr_1 & 0 & 0  &-\pos_1 & 0 \\
 0 & 0 &  \rr_3 & 0 & 0 &\rr_4 \\
 \epsilon_1 & 0 &  0 & \rr_2 & 0 &0  \\
 0 & \pos_1 & 0 & 0  &\rr_1 & 0 \\
 0 & 0 & \rr_4 & 0 & 0 &\rr_3 
 \end{pmatrix}
 \end{eqnarray*}
 \item Metaclass XVIII: $[21 1 1 1]$ $\quad \epsilon_1\in \{\pm 1\}$
 \begin{eqnarray*}
 \begin{pmatrix}
 \rr_1 & 0 & 0 & 0 & 0 &0 \\
 0 & \rr_2 & 0 & 0  &\rr_4 & 0 \\
 0 & 0 &  \rr_3 & 0 & 0 &\rr_5 \\
 \epsilon_1 & 0 &  0 & \rr_1 & 0 &0  \\
 0 & \rr_4 & 0 & 0  &\rr_2 & 0 \\
 0 & 0 & \rr_5 & 0 & 0 &\rr_3 
 \end{pmatrix}
 \end{eqnarray*}
 \item Metaclass XIX:   $[51]$ $\quad \epsilon_1\in \{\pm 1\}$
 \begin{eqnarray*}
 \begin{pmatrix} 
   \alpha_1 & 0 & 0 & 0 & 0 & 0 \\
 1 & \alpha_1 & 0 & 0 & 0 & 0 \\
 0 & \frac{1}{\sqrt{2}} & \frac 1 2 (\alpha_1+\alpha_2) & 0 & 0 & \frac{\epsilon_1}{2} (\alpha_1-\alpha_2) \\
 0 & 0 & 0 &\alpha_1 & 1 & 0 \\
 0 & 0 & \frac{\epsilon_1}{\sqrt{2}} &0 & \alpha_1 & \frac{1}{\sqrt{2}}\\
 0 & \frac{\epsilon_1}{\sqrt{2}} & \frac{\epsilon_1}{2} (\alpha_1-\alpha_2) & 0 & 0 & \frac 1 2 (\alpha_1+\alpha_2)
 \end{pmatrix}
 \end{eqnarray*}
 \item Metaclass XX: $[33]$
 \begin{eqnarray*}
 \begin{pmatrix}
   \alpha_1 & 0 & 0 & 0 & 0 & 0 \\
 \frac{1}{\sqrt{2}} & \frac 1 2 (\rr_1 + \rr_2)
  & -\frac{1}{\sqr} & 0 & \frac 1 2 (\alpha_1 -\alpha_2) & 0 \\
 0 & 0& \alpha_2 & 0 & 0 & 0 \\
 0 & \frac 1 {\sqr} & 0 &\alpha_1 & \frac {1}{\sqr} & 0 \\
 \frac{1}{\sqr} & \frac 1 2( \alpha_1-\alpha_2) & \frac{1}{ \sqrt{2}} &0 & \frac 1 2 (\rr_1 + \rr_2) & 0 \\
 0 & \frac 1 {\sqr} & 0 & 0 & -\frac {1} {\sqr} & \rr_2
 \end{pmatrix}
 \end{eqnarray*}
 \item Metaclass XXI: $[321]$ $\quad \epsilon_1,\epsilon_2\in \{\pm 1\}$
 \begin{eqnarray*}
 \begin{pmatrix} 
   \alpha_1 & 0 & 0 & \epsilon_2 & 0 & 0 \\
 0 & \alpha_2 & \frac{1}{\sqr} & 0 & 0 & \frac{\epsilon_1}{\sqr} \\
 0 & 0 & \frac 1 2 (\alpha_2+\alpha_3) & 0 & \frac{\epsilon_1}{\sqr} & \frac{\epsilon_1}{2} (\alpha_2-\alpha_3) \\
 0 & 0 & 0 &\alpha_1 & 0 & 0 \\
 0 & 0 & 0 &0 & \alpha_2 & 0\\
 0 & 0 & \frac{\epsilon_1}{2} (\alpha_2-\alpha_3) & 0 & \frac{1}{\sqr} & \frac 1 2 (\alpha_2+\alpha_3)
 \end{pmatrix}
 \end{eqnarray*}
 \item Metaclass XXII: $[31\, 1\overline 1 ]$ $\quad \epsilon_1\in \{\pm 1\}$
 \begin{eqnarray*}
 \begin{pmatrix}
   \alpha_1 & 0 & 0 & -\beta_1 & 0 & 0 \\
 0 & \alpha_2 & \frac{1}{\sqr} & 0 & 0 & \frac{\epsilon_1}{\sqr} \\
 0 & 0 & \frac 1 2 (\alpha_2+\alpha_3) & 0 & \frac{\epsilon_1}{\sqr} & \frac{\epsilon_1}{2} (\alpha_2-\alpha_3) \\
 \beta_1 & 0 & 0 &\alpha_1 & 0 & 0 \\
 0 & 0 & 0 &0 & \alpha_2 & 0\\
 0 & 0 & \frac{\epsilon_1}{2} (\alpha_2-\alpha_3) & 0 & \frac{1}{\sqr} & \frac 1 2 (\alpha_2+\alpha_3)
 \end{pmatrix}
 \end{eqnarray*}
 \item Metaclass XXIII: $[31\, 1 1 ]$ $\quad \epsilon_1\in \{\pm 1\}$
 \begin{eqnarray*}
 \begin{pmatrix}
   \alpha_3 & 0 & 0 & \rr_4 & 0 & 0 \\
 0 & \frac 1 2 (\rr_1+\rr_2)  & 0 & 0 & \frac {\epsilon_1}{2}(\rr_1-\rr_2)  & \frac{\epsilon_1}{\sqr} \\
 0 & \frac{1}{\sqr} & \alpha_1 & 0 & \frac{\epsilon_1}{\sqr} & 0 \\
 \rr_4 & 0 & 0 &\alpha_3 & 0 & 0 \\
 0 & \frac {\epsilon_1}{2}(\rr_1-\rr_2)  & 0 &0 & \frac {1}{2}(\rr_1+\rr_2) & \frac{1}{\sqr} \\
 0 & 0 & 0 & 0 & 0 & \rr_1
 \end{pmatrix}
 \end{eqnarray*}
 \end{itemize}
 For each meta-class, $\rr_i\in \setR$, $\pos_i>0$ for $i\in \{1,2,\ldots\}$
 and conditions for signs $\epsilon_i\in \{-1,+1\}$ 
 are given for each metaclass.
 \end{theorem}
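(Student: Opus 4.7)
The plan is to follow the strategy of Schuller--Witte--Wohlfarth, but worked out directly for the $2\choose 2$-tensor $\kappa$ rather than the associated area metric. The fundamental observation is that by Section \ref{sec:Rep6x6}, $\kappa|_p$ is represented in any coordinate chart by a $6\times 6$ matrix $A=(\kappa^J_I)_{IJ}$, and by the transformation rule \eqref{eq:kappaTransRule}, a change of coordinates conjugates $A$ by the wedge-square matrix $M=(\pd{\widetilde x^I}{x^J})_{IJ}$ built from the Jacobian. The image of the wedge-square map $GL(4,\setR)\to GL(6,\setR)$ is a $16$-dimensional subgroup of $GL(6,\setR)$, so not every linear change of basis on $\Omega^2(N)|_p$ can be realized by a coordinate change. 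The role of Definition \ref{def:simNotation}\emph{(ii)--(iii)} is to enlarge this group by conjugation with the Hodge star of a fixed Riemann or Lorentz metric, and Proposition \ref{prop:lorentzMixing} ensures this enlargement is natural.

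First I would use the Jordan decomposition (Theorem \ref{thm:jordan}) to partition the abstract possibilities for $\kappa|_p$ into the $23$ Segre types listed in \eqref{eq:SegreTypes}. The skewon-free hypothesis $\kappaII|_p=0$ is, by Proposition \ref{theorem:Decomp}, equivalent to the symmetry $u\wedge\kappa(v)=\kappa(u)\wedge v$ after subtracting the axion part $\kappaIII$; translated to matrices this reads $\Gamma A = A^t \Gamma$ for $\Gamma$ the Gram matrix of the wedge pairing in the basis \eqref{eq:2basis}. This symmetry constrains which Jordan patterns are realizable and, within each pattern, which bases can be adapted both to the Jordan structure and to the wedge-pairing symmetry. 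I would then, for each Segre type, construct a Jordan-and-wedge-adapted basis and attempt to realize it as $\{dx^I : I\in O\}$ for some coordinates.

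For metaclasses I--VII the Jordan structure has sufficient flexibility that the needed basis change always lies in the $GL(4,\setR)$ orbit, yielding alternative \ref{def:Sim:A}; the explicit normal forms are read off by writing the skewon-free condition against the chosen Jordan basis and parametrising the remaining free entries by $\rr_i\in\setR$ and $\pos_i>0$. For metaclasses VIII--XXIII the necessary transition matrix is not a wedge-square; here I would conjugate by $\ast_g$ for $g=\operatorname{diag}(1,1,1,1)$ or $g=\operatorname{diag}(1,-1,-1,1)$, choosing the signature dictated by the sign of a certain determinant invariant of the Jordan block that cannot be absorbed into the continuous $GL(4,\setR)$ action. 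The discrete signs $\epsilon_i\in\{\pm1\}$ parametrising metaclasses VIII--XXIII arise precisely because $\det M = (\det J)^3$ is always positive, so that certain orientation/sign invariants of nilpotent blocks survive as labels; the inequalities $\epsilon_1\le\epsilon_2\le\epsilon_3$ in metaclasses XIII--XV reflect the residual freedom to permute Jordan blocks of identical eigenvalue structure.

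The main obstacle will be the case-by-case verification for metaclasses VIII--XXIII, where one must exhibit an explicit $4\times 4$ transition matrix whose wedge-square, composed with a Hodge star, conjugates a generic skewon-free matrix of the given Segre type into the listed normal form. Each such verification reduces to a linear system in $16$ unknowns (the entries of the Jacobian) subject to nonlinear compatibility constraints, and in practice I would delegate this to the Mathematica computations cited in the introduction. The conceptual payoff over \cite{Schuller:2010} is that introducing three alternatives rather than two in Definition \ref{def:simNotation} lets metaclasses I--VII be covered by a single representative each; the proof that this collapse succeeds amounts to checking, within each of these seven Segre types, that the Jordan basis can already be chosen inside the wedge-square image without recourse to $\ast_g$-conjugation.
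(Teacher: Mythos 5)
Your skeleton matches the paper's: translate $\kappaII\vert_p=0$ into the matrix identity $BA=A^tB$ for the wedge-pairing Gram matrix $B=(\varepsilon^{IJ})_{IJ}$, bring the pair $(A,B)$ simultaneously to a Jordan-plus-pairing-adapted form, and then realize the adapting basis by a coordinate change, possibly composed with conjugation by a Hodge star. However, the mathematical heart of the argument is missing. The crucial step is to explain why \emph{every} $B$-compatible change of basis of $\Lambda^2_p(N)$ can be realized as a coordinate transformation up to exactly one of the four twists $H_{(0)},\dots,H_{(3)}$, i.e.\ up to conjugation by $\ast_g$ with $g=\operatorname{diag}(1,1,1,1)$ or $g=\operatorname{diag}(1,-1,-1,1)$ — this is what makes the similarity relation of Definition \ref{def:simNotation} sufficient at all. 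In the paper this is Proposition \ref{simChar}, which rests on Proposition \ref{thm:22tensor}: a substantial Pl\"ucker-decomposability case analysis showing that any six $2$-forms $T^I$ with $T^I\wedge T^J=\varepsilon^{IJ}\omega$ arise from a coordinate coframe via one of the $H_{(\alpha)}$. You replace this by ``attempt to realize it as $\{dx^I\}$'' plus a per-metaclass search for a Jacobian by computer algebra; that neither shows why the two Hodge stars of Definition \ref{def:simNotation} always suffice nor guarantees the search succeeds (the realization problem is quadratic in the Jacobian, not the linear system you describe). A second gap is the origin of the discrete data: you derive the signs $\epsilon_i$ from the claim that the wedge-square matrix satisfies $\det M=(\det J)^3>0$, but $(\det J)^3$ has the sign of $\det J$, so this is false. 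In the paper the signs, their number, and the ordering constraints such as $\epsilon_1\le\epsilon_2\le\epsilon_3$ come from the canonical form of the $H$-selfadjoint pair $(A,B)$ (Theorem \ref{thm:simuDiag}), with the admissible sign patterns cut down by the fixed signature $(---+++)$ of $B$; no substitute for this pair-canonical-form theorem appears in your plan.

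Your division of labour by metaclass is also not quite the structure of the result: it is not that metaclasses I--VII land in the wedge-square orbit while VIII--XXIII require $\ast_g$; rather, the general argument yields three coordinate representations for every metaclass, and the reduction of I--VII to alternative \ref{def:Sim:A} is a separate final step, proved in the paper by exhibiting explicit linear coordinate changes ($J_1$, $J_2$, $J_3$) that undo the Hodge-star conjugations after redefining the parameters. Your closing remark that one should check, for these seven types, that the adapted basis can be chosen inside the wedge-square image is the right kind of statement, but it is asserted, not argued; as it stands the proposal would need Proposition \ref{thm:22tensor} (or an equivalent), Theorem \ref{thm:simuDiag}, and the explicit I--VII verifications to become a proof.
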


 \begin{proof}
 Let $B$ be the $6\times 6$ matrix
 $B=(\varepsilon^{IJ})_{IJ}=H_{(2)}$, where $\varepsilon^{IJ} =
 \varepsilon^{I_1 I_2 J_1 J_2}$ for $I,J\in O$, and 
 $H_{(2)}$ is as in equation \eqref{eq:Hdef2Eq}.

 \textbf{Claim 1.}  For any Segre type $s$ in the list
 \eqref{eq:SegreTypes}, there exists a non-empty finite set of
 invertible $6\times 6$ matrices $\mathscr{S}_s \subset \setR^{6\times
   6}$ with the following property
 \begin{enumerate}
 \item[$(\ast)$]
    If $\kappa\in \Omega^2_2(N)$ is such that $\kappaII\vert_p=0$ and 
    $\kappa\vert_p$ has Segre type $s$, then 
 \begin{eqnarray*}
 \kappa\vert_p &\sim& S\cdot V\cdot S^{-1}.
 \end{eqnarray*}
 for some
  $S\in \mathscr{S}_s$, and a
   Jordan normal form matrix $V\in \setR^{6\times 6}$ with Segre type
  $s$. (See Appendix \ref{app:normalForms} for the definition of Jordan normal form.)
 \end{enumerate}

 To construct $\mathscr{S}_s$,  let $s=\left[m_1\cdots m_r\,
   k_{1}\overline{k_{1}}\cdots k_{s}\overline{k_{s}}\right]$ be a Segre
 type from the list \eqref{eq:SegreTypes}, and 
 let $\mathscr{W}_s\subset \setR^{6\times 6}$ be the
 set of matrices of the form
 \begin{eqnarray*}
   W &=& \bigoplus_{j=1}^r \,\,\epsilon_{j} F_{m_j}\quad\,\,\, \oplus\,\,\, \quad \bigoplus_{j=1}^s F_{2k_j},
 \end{eqnarray*}
 where 
 $\epsilon_1, \ldots, \epsilon_r \in \{\pm 1\}$
 are such that  
 \emph{(i)} $\{\epsilon_j\}_{j=1}^r$ satisfy condition \ref{cond:simuPerB} in Theorem
 \ref{thm:simuDiag} and \emph{(ii)} each $W\in \mathscr{W}_s$ has signature
 $(---+++)$.  It is clear that $\mathscr{W}_s$ is finite and computer algebra shows that
 $\mathscr{W}_s$  is not empty for any $s$.
 If $W\in \mathscr{W}_s$, then $W$ and $B$ are both symmetric matrices
 with spectrum $\{1,1,1,-1,-1,-1\}$ 
 whence there exists an (orthogonal) $S\in \setR^{6\times 6}$ such that
 \begin{eqnarray}
  W &=& \label{eq:WBrel} S^t\cdot B\cdot S.
 \end{eqnarray}  
 Thus, for each $W\in \mathscr{W}_s$ we can find some $S\in
 \setR^{6\times 6}$ such that equation \eqref{eq:WBrel} holds. Let us
 denote one such $S$ by $S=S_W$, and let $\mathscr{S}_s=\{ S_W\in
 \setR^{6\times 6} : W\in \mathscr{W}_s\}$.  
 Let us also note that
 $\mathscr{S}_s$ is not uniquely determined by $s$.

 To show that $\mathscr{S}_s$ satisfies property $(\ast)$,
 let $\kappa\in \Omega^2_2(N)$ be such that
 $\kappaII\vert_p=0$ and $\kappa\vert_p$ has Segre type $s$. 
 Moreover, in coordinates $\{x^i\}_{i=0}^3$ around $p$, let $\kappa^J_I$ be components for $\kappa\vert_p$ 
 as in equation \eqref{eq:kappaMatDef}.
 Then 
 Theorem \ref{theorem:Decomp} implies that
 \begin{eqnarray*}
     \kappa^I_K \varepsilon^{KJ} &=& \kappa^J_K \varepsilon^{KI},\quad I,J\in O.
 \end{eqnarray*}
 For $A=(\kappa_I^J)_{IJ}$ we have $BA=A^t B$, so we can apply Theorem \ref{thm:simuDiag}, and there exists an
 invertible matrix $L\in \setR^{6\times 6}$ such that
 \begin{eqnarray}  
 L^{-1}\cdot A\cdot L &=& \label{eq:A11} V,\\
 L^t\cdot B \cdot L &=&\label{eq:B11} W,
 \end{eqnarray}
 where $V$ is a Jordan normal form matrix with the same Segre type as
 $\kappa\vert_p$ and $W\in \mathscr{W}_s$. Now there exists an $S\in
 \mathscr{S}_s$ such that $W =S^t \cdot B\cdot S$ whence
 \begin{eqnarray}  
 A&=& \label{eq:AA} (SL^{-1})^{-1} \cdot (S\cdot V\cdot S^{-1})\cdot (SL^{-1}),\\
 B&=& \label{eq:BB} (SL^{-1})^t \cdot B\cdot (SL^{-1})
 \end{eqnarray}
 and $\kappa\vert_p\sim S\cdot V\cdot S^{-1}$ follows by Proposition
 \ref{simChar} in Appendix \ref{app:22tensor}.


 If $S\in\setR^{6\times 6}$ is one solution to equation
 \eqref{eq:WBrel}, then the set of all solutions 
  is given by $\{ \Lambda S \in \setR^{6\times 6} : \Lambda^t \cdot
 B\cdot \Lambda = B\}$, and each solution typically gives rise to a
 different normal form for the metaclass. 
 To complete the proof we need to go through all 23 Segre types, and
 for each Segre type $s$, we compute $S\cdot V\cdot S^{-1}$ for all
 $S\in \mathscr{S}_s$ 
 (for a suitable choice of $S$ and $\mathscr{S}_s$)
 and for all
 Jordan normal form matrices $V$ with Segre type $s$.
 The choice of $S$ and $\mathscr{S}_s$ are chosen so that
 normal forms on the theorem formulation correspond to the
 normal forms in  \cite{Schuller:2010} via 
 the correspondence in Proposition \ref{prop:correspondence} with $g=\operatorname{diag}(1,1,1,1)$.

 To show the last claim for Metaclasses I, II, $\ldots$, VI, VII, we
 need to show that the conjugations by Hodge star operators can be
 replaced by coordinate transformations and by possibly redefining the
 constants that appear in the normal form matrices. If $\{x^i\}_{i=0}^3$ are
 coordinates where $\kappa\vert_p \sim V$ holds, let $\{\widetilde
 x^i\}_{i=0}^3$ be coordinates determined by $\widetilde x^i = J^i_j
 x^j$ for a suitable $4\times 4$ matrix $J=(J^i_j)_{ij}$. If
 $g_1=\operatorname{diag}(1,1,1,1)$ and
 $g_2=\operatorname{diag}(1,-1,-1,1)$ are metrics as in Definition
 \ref{def:simNotation} then suitable choices for $J$ are
 $$
 \begin{array}{lccccccccc}
 \mbox{Metaclass}&&&I & II & III & IV & V & VI & VII\\
 \hline
 \mbox{Conjugation by $\ast_{g_1}$} &&
 & J_1 & J_2 & J_3 & J_1 & J_2 & J_1 & \operatorname{Id} \\
 \mbox{Conjugation by $\ast_{g_2}$} &&&J_1 & J_2 & J_2 & J_1 & J_2 & J_1 & \operatorname{Id} 
 \end{array}
 $$
 where $J_1 = \operatorname{diag}(-1,1,1,1)$ and
 \begin{eqnarray*}
 J_2 = 
 \begin{pmatrix}
 0 &0 &0 &1\\
 0 &1 &0 &0\\
 0 &0 &1& 0\\
 1 &0 &0 &0\\
 \end{pmatrix},
 \quad
 J_3 = 
 \begin{pmatrix}
 0& 0& 0& -1\\
 0 &1& 0 &0\\
 0 &0& 1& 0\\
 -1 &0& 0& 0
 \end{pmatrix}.\qedhere
 \end{eqnarray*}
 \end{proof}

 \appendix

 \section{Proposition \ref{simChar}}
 \label{app:22tensor}
 \HOX{Appendix A OK}

 In this appendix we state and prove Proposition \ref{simChar}, which gives
 two alternative descriptions for $\kappa\vert_p\sim V$
 in Definition \ref{def:simNotation}. 

 \begin{proposition}
 \label{simChar}
 Suppose $N$ is a $4$-dimensional manifold,  $\kappa\in \Omega^2_2(N)$, $p\in N$ and
 $V\in \setR^{6\times 6}$.
 Then the following conditions are equivalent
 \begin{enumerate}
 \item \label{thm:condW1}
 $\kappa\vert_p\sim V$.
 \item \label{thm:condW2} 
 There are coordinates $\{x^i\}_{i=0}^3$
 around $p$ and an $\alpha \in \{1,2,3\}$ such that
 \begin{eqnarray}
 \label{eq:simNotationX}
 (\kappa^J_I)_{IJ} &=& H_{(\alpha)} \cdot V\cdot H_{(\alpha)},
 \end{eqnarray}
 where $\kappa^J_I$ are components that represent $\kappa\vert_p$ in
 coordinates $\{x^i\}_{i=0}^3$ as in equation \eqref{eq:kappaMatDef},
 and $H_{(1)}$, $H_{(2)}$, $H_{(3)}$ are the matrices in equations
 \eqref{eq:Hdef1Eq}--\eqref{eq:Hdef2Eq} in Appendix \ref{app:22tensor}.
 \item \label{thm:condW3} There are
   coordinates $\{x^i\}_{i=0}^3$ around $p$ and there
 exists an invertible $S\in \setR^{6\times 6}$ such that
 \begin{eqnarray}
 (  \kappa^J_I)_{IJ} &=& \label{eq:assA1}
  S^{-1}\cdot V\cdot S, \\
   B&=&\label{eq:assA2}
  S^t\cdot B\cdot S,
 \end{eqnarray}
 where $B$ is the $6\times 6$ matrix $B=(\varepsilon^{IJ})_{IJ}=H_{(2)}$.
 \end{enumerate}
 \end{proposition}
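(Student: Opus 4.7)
The plan is a cyclic argument \ref{thm:condW1} $\Rightarrow$ \ref{thm:condW2} $\Rightarrow$ \ref{thm:condW3} $\Rightarrow$ \ref{thm:condW1}. First I would pin down the three matrices $H_{(\alpha)}$: by inspection of their definitions in the appendix, $H_{(1)}=\operatorname{Id}$, while $H_{(2)}$ and $H_{(3)}$ are the $6\times 6$ matrix representations in the basis \eqref{eq:2basis} of the Hodge star operators $\ast_g$ for $g=\operatorname{diag}(1,1,1,1)$ and $g=\operatorname{diag}(1,-1,-1,1)$ respectively, read off directly from \eqref{eq:hodgeKappaLocal}; in particular $H_{(2)}=B$. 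Each $H_{(\alpha)}$ is involutive: in dimension four one has $\ast_g^2=(-1)^q$ on $2$-forms, where $q$ is the number of negative eigenvalues of $g$, giving $+1$ for both signatures $(+{+}{+}{+})$ and $(+{-}{-}{+})$. Each $H_{(\alpha)}$ also satisfies $H_{(\alpha)}^t B H_{(\alpha)}=B$, a finite matrix check that is also conceptually transparent from the fact that the Hodge star preserves the wedge-product pairing. By the compatibility of matrix multiplication with composition of ${2\choose 2}$-tensors noted after \eqref{eq:matrixCorr}, the operation $\kappa\mapsto\ast_g\circ\kappa\circ\ast_g$ is realised on matrices as $(\kappa^J_I)_{IJ}\mapsto H_{(\alpha)}\cdot(\kappa^J_I)_{IJ}\cdot H_{(\alpha)}$.

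With this setup in hand, \ref{thm:condW1} $\Rightarrow$ \ref{thm:condW2} is immediate by going through alternatives \ref{def:Sim:A}, \ref{def:Sim:B}, \ref{def:Sim:C} of Definition \ref{def:simNotation} and using $H_{(\alpha)}^{-1}=H_{(\alpha)}$. Similarly \ref{thm:condW2} $\Rightarrow$ \ref{thm:condW3} follows at once by taking $S=H_{(\alpha)}$: the equation $S^{-1}VS=H_{(\alpha)}V H_{(\alpha)}=(\kappa^J_I)_{IJ}$ is then a restatement of \ref{thm:condW2}, and $S^t B S=B$ is the orthogonality identity just recorded.

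The main obstacle is \ref{thm:condW3} $\Rightarrow$ \ref{thm:condW1}. Given $S\in O(B)$ with $(\kappa^J_I)_{IJ}=S^{-1}VS$, one must exhibit new coordinates and an $\alpha\in\{1,2,3\}$ that realise one of the alternatives of Definition \ref{def:simNotation}. Because the transformation rule \eqref{eq:kappaTransRule} shows that a coordinate change $\widetilde x^i=T^i_j x^j$ acts on $(\kappa^J_I)_{IJ}$ by conjugation with the induced matrix $\Lambda^2 T\in GL(6)$, it suffices to factor $S=\Lambda^2 T\cdot H_{(\alpha)}$ for some $T\in GL(4)$ and some $\alpha$: then the coordinate change by $T$ brings $(\kappa^J_I)_{IJ}$ to $H_{(\alpha)}V H_{(\alpha)}$, which is \ref{thm:condW2} and hence \ref{thm:condW1} by the first paragraph. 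The plan for this factorisation is to invoke the classical isomorphism $PSL(4,\setR)\cong SO^+(3,3)$ obtained from the Pl\"ucker embedding $\mathrm{Gr}(2,4)\hookrightarrow \Set{P}(\Lambda^2\setR^4)$ (whose image is the Klein quadric cut out by $B$): one computes $(\Lambda^2 T)^t B(\Lambda^2 T)=\det(T)\, B$, so $\Lambda^2(SL(4))$ is a $15$-dimensional connected subgroup of $O(B)$ and must coincide with the identity component $SO^+(3,3)$. It then remains to check that the three matrices $H_{(\alpha)}$, combined with orientation-reversing coordinate changes, provide representatives for the remaining cosets of $SO^+(3,3)$ in $O(3,3)$. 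Identifying which coset each $H_{(\alpha)}$ lies in and matching this to the orientation data of the available coordinate changes is the delicate part of the argument, but requires only case-by-case bookkeeping rather than substantively new ideas.
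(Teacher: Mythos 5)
Your handling of the two easy implications coincides with the paper's (the paper also observes that $H_{(2)}$ and the Hodge matrices represent $\ast_g$ and takes $S=H_{(\alpha)}$ for \ref{thm:condW2} $\Rightarrow$ \ref{thm:condW3}; one small slip: the matrix of $\ast_g$ for $g=\operatorname{diag}(1,-1,-1,1)$ in the basis \eqref{eq:2basis} is $-H_{(3)}$, not $H_{(3)}$, which is harmless because only conjugation by it is used). For the hard implication your route is genuinely different from the paper's: the paper proves the elementary decomposability result Proposition \ref{thm:22tensor} (six $2$-covectors satisfying the Pl\"ucker-type relations \eqref{eq:TIJcond} are, up to one of $H_{(0)},H_{(1)},H_{(2)},H_{(3)}$, generated by a coframe), applies it to the rows of $S^{-1}$, and builds coordinates from the resulting coframe, whereas you invoke $\Lambda^2(SL(4,\setR))=SO^+(3,3)$ and a coset analysis in $O(B)\cong O(3,3)$. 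That is a legitimate, more conceptual packaging of the same underlying fact, and your reduction via \eqref{eq:kappaTransRule} to the factorisation $S=\Lambda^2 T\cdot H_{(\alpha)}$ is sound.

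However, the covering step as you describe it has a genuine flaw: orientation-reversing coordinate changes cannot help. If $\det T<0$ then $(\Lambda^2 T)^t\cdot B\cdot(\Lambda^2 T)=\det(T)\,B\neq B$, so $\Lambda^2 T\notin O(B)$ and hence $\Lambda^2 T\cdot H_{(\alpha)}\notin O(B)$ either; such products can never equal an $S$ satisfying \eqref{eq:assA2}. Consequently only $T\in SL(4,\setR)$ is available, $\Lambda^2(SL(4,\setR))=SO^+(3,3)$, and since $H_{(1)}=\operatorname{Id}$, $H_{(2)}$, $H_{(3)}$ lie in three distinct components of $O(3,3)$, none of which is the component containing $-\operatorname{Id}$, the factorisation $S=\Lambda^2 T\cdot H_{(\alpha)}$ with $\alpha\in\{1,2,3\}$ is impossible for $S$ in the fourth component. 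The repair is easy but must be made explicit: for such $S$ one has $-S\in SO^+(3,3)=\Lambda^2(SL(4,\setR))$, and $S^{-1}\cdot V\cdot S=(-S)^{-1}\cdot V\cdot(-S)$, so this case reduces to \eqref{eq:simNotationX} with $\alpha=1$. This is precisely the role of the extra matrix $H_{(0)}=-\operatorname{Id}$ in the paper's Proposition \ref{thm:22tensor}, whose conclusion allows $\alpha\in\{0,1,2,3\}$ and is then collapsed using $H_{(0)}\cdot V\cdot H_{(0)}=V$. With that correction, plus the finite check that $H_{(2)}$ and $H_{(3)}$ indeed represent the two remaining components, your argument goes through.
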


 \begin{proof}  
   Equivalence \ref{thm:condW1} $\Leftrightarrow$ \ref{thm:condW2}
   follows since $H_{(2)}$ and $-H_{(3)}$ are matrix representations of
   $\ast_g$ in the basis \eqref{eq:2basis} for metrics
   $g=\operatorname{diag}(1,1,1,1)$ and
   $g=\operatorname{diag}(1,-1,-1,1)$, respectively.
 Implication \ref{thm:condW2}  $\Rightarrow$ \ref{thm:condW3} follows by taking $S=H_{(\alpha)}$.
 For implication \ref{thm:condW3}  $\Rightarrow$ \ref{thm:condW2}, 
 let $A$ be the $6\times 6$ matrix $A=(\kappa^J_I)_{IJ}$, 
  let $\{T^J_I : I,J\in O\}$ be the array of components such that
 $(T^J_I)_{IJ} = S^{-1}$, 
 and for each $J\in O$ let $T^J\in \Lambda^2_{p} (N)$ be defined by
 \begin{eqnarray}
 \label{eq:TIdef}
   T^J &=& T^J_I dx^I.
 \end{eqnarray}
 Equation \eqref{eq:assA2} implies that $ B = \label{eq:BBp} S^{-t}
 \cdot B\cdot S^{-1}$. Thus $\{T^J : J \in O\}$ satisfy the assumptions
 in Proposition \ref{thm:22tensor} whence there exist linearly independent
 covectors $\{\xi^i\}_{i=0}^3$ in $\Lambda_p^1(N)$ such that equation
 \eqref{eq:ThmA2Yeq} holds for some $\alpha \in \{0,1,2,3\}$.  Around
 $p$, let $\{\widetilde x^i\}_{i=0}^3$ be coordinates defined as
 $ \widetilde x^i = \xi^i\left( \left. \pd{}{x^b}\right\vert_p\right) x^b$.
 To see that $\{\widetilde x^i\}_{i=0}^3$ are coordinates it suffices
 to show that $\left( dx^i(u_j)\right)_{ij}$ is the inverse matrix to
 $\left(\pd{\widetilde x^i}{x^j}\right)_{ij}$ when $\{u_i\}_{i=0}^3$ is
 a dual basis to $\{\xi^i\}_{i=0}^3$.  Thus $\xi^i = d\widetilde
 x^i\vert_p$ and equations \eqref{eq:ThmA2Yeq}, \eqref{eq:TIdef}
 and $d\widetilde x^I = \pd{\widetilde x^I}{x^L} dx^L$ imply that
 $T^J_I = Y^J_{(\alpha)K} \pd{\widetilde x^K}{x^I}$.  Equation \eqref{eq:assA1}
 further implies that $A \cdot S^{-1} = S^{-1} \cdot V$ and by equation 
 \eqref{eq:kappaTransRule},
 \begin{eqnarray*}
  \widetilde \kappa^L_I Y^{J}_{(\alpha)L} &=& Y^L_{(\alpha)I} V_{b(L) b(J)}, \quad I,J\in O.
 \end{eqnarray*}
 Since $H_{(\alpha)}^2 = \operatorname{Id}$ 
 it follows that 
 $( \widetilde \kappa^J_I)_{IJ} =H_{(\alpha)}\cdot V\cdot H_{(\alpha)}$
 where $\alpha\in \{0,1,2,3\}$,
 and part \ref{thm:condW2} follows. 
 \end{proof}

 \begin{proposition} 
 Suppose 
 \label{thm:22tensor}
 $ T^{I}\in \Lambda^2_p(N) $ for all $I\in O$ on a $4$-manifold $N$,
 where $O$ is as in Section \ref{sec:Rep6x6} and $p\in N$.
 Moreover, suppose that
 \begin{eqnarray}
 \label{eq:TIJcond}
   T^{I}\wedge T^{J} &=& \varepsilon^{IJ} \omega, \quad I,J\in O
 \end{eqnarray}
 for some $\omega\in \Lambda^4_p(N)\slaz$.
 Then there exists linearly independent $\xi_0,\ldots, \xi_3 \in
 \Lambda^1_p(N)$ and an $\alpha\in \{0,1,2,3\}$ such that
 \begin{eqnarray}
 \label{eq:ThmA2Yeq}
    T^J &=& Y^J_{(\alpha)I} \xi^I, \quad J\in O,
 \end{eqnarray}
 where $\xi^I=\xi^{I_1}\wedge \xi^{I_2}$ and $Y^J_{(\alpha)I}$ are
 components such that $(Y^J_{(\alpha)I})_{IJ}=H_{(\alpha)}$ for one of
 the $6\times 6$ matrices
 \begin{eqnarray}
 \label{eq:Hdef1Eq}
 \quad\quad H_{(0)} &=& -\operatorname{Id},\quad\quad \quad \quad\quad\quad \quad\quad\,\,
 \,\,\,H_{(1)} \,\,=\,\, \operatorname{Id},\\ 
 \quad\quad H_{(2)} &=& 
 \label{eq:Hdef2Eq}
 \begin{pmatrix}
 & &  & 1&  &  \\
 &  & & & 1 &  \\
 &  & & &  &1  \\
 1&  & &&  &  \\
 & 1 & &&  &  \\
 &  &1 &&  &  
 \end{pmatrix}, \quad   
 H_{(3)}\,\, =\,\, \begin{pmatrix}
 & &  & 1&  &  \\
 &  & & & 1 &  \\
 &  & & &  & -1 \\
 1&  & &&  &  \\
 & 1 & &&  &  \\
 &  &-1 &&  &  
 \end{pmatrix}.
 \end{eqnarray}
 \end{proposition}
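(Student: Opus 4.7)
The plan is to exploit that $\varepsilon^{II}=0$ while $\varepsilon^{I\,I^{*}}\neq0$ for the complementary pair $I^{*}$. The first fact gives $T^{I}\wedge T^{I}=0$, so each $T^{I}$ is a decomposable 2-form, say $T^{I}=\eta^{I}\wedge\mu^{I}$. The second shows the six $T^{I}$ are linearly independent in $\Lambda^2_{p}(N)$. Moreover, the coefficients $\varepsilon^{IJ}$ vanish exactly when $I$ and $J$ share an index, so the triples $\{T^{01},T^{02},T^{03}\}$ and $\{T^{23},T^{31},T^{12}\}$ are each pairwise wedge-annihilating, and each spans a maximal isotropic 3-dimensional subspace of $(\Lambda^2_{p}(N),\wedge)$.

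The central step is a classical dichotomy for a triple $\alpha_1,\alpha_2,\alpha_3$ of linearly independent decomposable bivectors with pairwise vanishing wedges: either \emph{(a)} they share a common 1-form factor, $\alpha_j=\xi\wedge\eta^{j}$ with $\eta^{1},\eta^{2},\eta^{3}$ linearly independent, or \emph{(b)} they span $\Lambda^{2}W$ for some 3-dimensional $W\subset\Lambda^1_{p}(N)$. This is the two-rulings description of the Plücker quadric, and can be verified directly: picking two of the $\alpha_j$, the vanishing $\alpha_1\wedge\alpha_2=0$ forces a shared factor, and the third either shares that factor (case a) or contributes a new factor that fills out $W$ (case b). Applying this dichotomy to each of the two isotropic triples gives four scenarios. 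In each, the two triples together produce four 1-forms, and the non-degeneracy $T^{I}\wedge T^{I^{*}}\neq 0$ guarantees their linear independence, yielding the desired basis $\{\xi^{0},\xi^{1},\xi^{2},\xi^{3}\}$.

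It then remains to verify that, after absorbing scalar rescalings of the $\xi^{i}$ into the decompositions, each of the four scenarios produces a relation $T^{J}=Y^{J}_{(\alpha)I}\xi^{I}$ with $(Y^{J}_{(\alpha)I})_{IJ}$ equal to one of $H_{(0)},H_{(1)},H_{(2)},H_{(3)}$. The sign conditions $\varepsilon^{I\,I^{*}}=+1$ (for the chosen ordering of $O$) fix all but a single discrete sign within each pair $\{H_{(0)},H_{(1)}\}$ and $\{H_{(2)},H_{(3)}\}$, which then accounts for the extra value of $\alpha$. The main obstacle is executing Step 2 cleanly: the combinatorial bookkeeping of how the two triples fit together, and confirming that no normalizations beyond the listed $H_{(\alpha)}$ are possible. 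Conceptually this exhaustion is unsurprising because the orthogonal group $O(3,3)$ of the wedge pairing on $\Lambda^{2}_{p}(N)$ has exactly four connected components, and the image of $GL(\Lambda^{1}_{p}(N))$ under $\Lambda^{2}$ fills out the identity component modulo $\pm I$; the four $H_{(\alpha)}$ are natural coset representatives corresponding to Hodge-star-like involutions.
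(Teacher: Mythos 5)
Your overall strategy is sound and, if completed, would be a genuinely different (and more conceptual) route than the paper's: the paper builds the covectors one at a time from $T^{01},T^{02},T^{03},T^{12},T^{23},T^{31}$ through an explicit case analysis, whereas you invoke the two rulings of the Klein quadric for the two isotropic triples $\operatorname{span}\{T^{01},T^{02},T^{03}\}$ and $\operatorname{span}\{T^{23},T^{31},T^{12}\}$. However, as written the proposal has a genuine gap: the step you yourself flag as "the main obstacle" is exactly the content of the proposition, and it is not carried out. Concretely, after the dichotomy you must use the \emph{full} set of relations $T^I\wedge T^J=\varepsilon^{IJ}\omega$ (the cross terms such as $T^{01}\wedge T^{31}=0$ and $T^{23}\wedge T^{01}=\omega$, not merely isotropy of each triple and nonvanishing on complementary pairs) to show that the coefficients of the second triple in the basis built from the first can be \emph{simultaneously} normalised by rescaling the four covectors, leaving only a single residual sign; this is what the paper's Claims 1--5 do, and nothing in your sketch substitutes for it. The closing appeal to the component structure of $O(3,3)$ is a heuristic, not an argument, and it is inaccurate as stated: after conformal rescaling, $\Lambda^2$ of an orientation-preserving map gives $SO^+(3,3)$ while orientation-reversing maps give anti-isometries (which negate the wedge pairing), $-\operatorname{Id}_6$ is not in the image of $\Lambda^2$, and the correct count involves the eight components of isometries plus anti-isometries modulo a two-component subgroup.

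A second, smaller inaccuracy is the claim that the dichotomy yields "four scenarios" matching the four matrices $H_{(\alpha)}$. The scenarios in which both triples share a common factor, or both equal $\Lambda^2W_i$ for three-dimensional $W_1,W_2$, are impossible: in either case one produces a nonzero bivector (a multiple of $\xi\wedge\eta$ with $\xi,\eta$ the two common factors, respectively an element of $\Lambda^2(W_1\cap W_2)$) that wedges to zero against all six $T^I$, contradicting the invertibility of $(\varepsilon^{IJ})_{IJ}$. Only the two mixed scenarios survive, and the four normal forms $H_{(0)},H_{(1)}$ and $H_{(2)},H_{(3)}$ arise as a residual sign choice within each of them (this is visible in the paper, where Claim 3 produces $\pm\operatorname{Id}$ and Claims 4--5 produce $H_{(2)},H_{(3)}$). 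Likewise, the assertion that non-degeneracy "guarantees" linear independence of the four covectors needs the short argument that the common factor of one triple cannot lie in the three-dimensional space $W$ attached to the other. None of these points is fatal to your plan, but until the normalisation and the corrected case bookkeeping are actually executed, the proposal is an outline rather than a proof.
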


 \begin{proof} 
   Let us first note that equation \eqref{eq:TIJcond} implies that
   $T^{I}$ is non-zero for each $I\in O$.  Let $g$ be an auxiliary
   positive definite Riemann metric on $N$. Then
 \begin{eqnarray*}
 \langle u,v\rangle &=& \ast_g (u\wedge v), \quad u,v\in \Lambda^2_p(N)
 \end{eqnarray*}
 defines an indefinite inner product in $\Lambda^2_p(N)$ of signature
 $(+++---)$ \cite{Harnett:1991}. For a vector subspace $W\subset
 \Lambda^2_p(N)$, we denote the \emph{orthogonal complement} (with
 respect to $\langle \cdot, \cdot\rangle$) by $W^\perp$ \cite[p. 49]{ONeill:1983}.

 \textbf{Claim 1.} There exists linearly independent covectors $\xi^0,
 \xi^1, \xi^2\in \Lambda^1_p(N)$ such that
 \begin{eqnarray}
 \label{eq:SolveEqClaim1}
 T^{0i} &=& \xi^0\wedge \xi^i, \quad i\in \{1,2\}.
 \end{eqnarray}

 In four dimensions, the Plucker identities states that a $q\in \Lambda^2_p(N)$ can be
 written as $q=a\wedge b$ for some $a,b\in \Lambda^1_p(N)$ if and
 only if $q\wedge q=0$ \cite[p. 184]{Cohn:2005}.
 Thus equation \eqref{eq:TIJcond}
 implies that there exist $\xi_0,\xi_1 \in \Lambda^1_p(N)$ such that
 \begin{eqnarray}
 \label{eq:ClT01}
  T^{01}=\xi^0\wedge \xi^1.
 \end{eqnarray}
 Since $T^{01}\neq 0$, covectors $\xi^{0}$ and $\xi^{1}$ are linearly independent.
 Let $\xi^2,\xi^3\in \Lambda^1_p(N)$ be such that $\{\xi^i\}_{i=0}^3$ is a basis for $\Lambda^1_p(N)$.
 For $W = \operatorname{span}\{ T^{01} \}$ we then have
 $\dim W^\perp = 5$ and
 \begin{eqnarray*}
 W^\perp &=& \operatorname{span}\{ \{\xi^0\wedge \xi^r\}_{r=1}^3,\, \{ \xi^1 \wedge \xi^s \}_{s=2}^3 \}.
 \end{eqnarray*}
 Since $T^{02} \in W^\perp$ we have
 \begin{eqnarray}
 \label{eq:SDef}
  T^{02} &=& A \xi^0\wedge \xi^1 + \xi^0\wedge \zeta^0 + \xi^1\wedge \zeta^1,
 \end{eqnarray}
 for some $A\in \setR$ and $\zeta^0, \zeta^1\in \operatorname{span} \{\xi^i\}_{i=2}^3$.
 From
 $T^{02}\wedge T^{02}=0$, it follows that
 $
  \xi^0\wedge \xi^1\wedge \zeta^0 \wedge \zeta^1=0.
 $
 Thus covectors $\xi^0, \xi^1, \zeta^0, \zeta^1$ are linearly dependent and there are 
 constants $C_i$ such that
 \begin{eqnarray}
 \label{eq:linDep}
 C_1 \xi^0  + C_2 \xi^1 +   C_3 \zeta^0  + C_4 \zeta^1 &=&0,
 \end{eqnarray}
 and all $C_1, C_2, C_3, C_4$ are not zero. Since $C_1=C_2=0$, we can not have $C_3 = C_4=0$.
 If $C_3\neq 0$, then $\zeta^0 = -\frac{C_4}{C_3} \zeta^1$ and equations \eqref{eq:ClT01} and \eqref{eq:SDef} yield
 $$
 T^{01} = \left( \xi^1- \frac {C_4}{C_3} \xi^0\right) \wedge (-\xi^0),\quad
 T^{02} = \left( \xi^1- \frac {C_4}{C_3} \xi^0\right) \wedge \left(\zeta^1 - A\xi^0\right).
 $$
 If $\zeta^1=0$ then $\zeta^0=0$ whence equation \eqref{eq:SDef} implies that $T^{02}=AT^{01}$
 and $A\neq 0$.
 Writing out $AT^{01}\wedge T^{23} = T^{02}\wedge T^{23}$ using equation \eqref{eq:TIJcond}
 gives a contradiction, so $\zeta^1\neq 0$. Hence 
 covectors $\xi^1- \frac {C_4}{C_3} \xi^0, -\xi^0$ and $\zeta^1 - A\xi^0$ are linearly independent
 and Claim 1 follows.
 The case $C_4\neq 0$ follows similarly.

 \textbf{Claim 2.} 
 There exists a basis $\{\xi^i\}_{i=0}^3$ for $\Lambda^1_p(N)$ such that
 equations \eqref{eq:SolveEqClaim1} hold and
 \begin{eqnarray}
 \label{eq:claim2}
  T^{03} &=& \xi^0 \wedge \zeta + D \xi^1\wedge \xi^2
 \end{eqnarray}
 for some $D\in \setR$ and $\zeta \in \operatorname{span}\{\xi^i\}_{i=1}^3$.

 If $\xi^0, \xi^1, \xi^2 \in \Lambda^1_p(N)$ are as in Claim 1, then there exists a
 $\xi^3 \in \Lambda^1_p(N)$ 
 such that
 $\{\xi^i\}_{i=0}^3$ is a basis for $\Lambda^1_p(N)$.  For
 $W=\operatorname{span}\{T^{01}, T^{02} \}$, we then have $\dim W^\perp = 4$ 
 and
 \begin{eqnarray*}
  W^\perp &=& \operatorname{span} \{ \xi^1\wedge \xi^2, \,\,  \{
 \xi^{0}\wedge \xi^i \}_{i=1}^3\}.
 \end{eqnarray*}
Claim 2 follows since $T^{03}\in W^\perp$ 

In Claim 2 we may assume that $D$ and $\zeta$ are not both zero since $T^{03}\neq 0$.
The proof then divides into three cases: 
$D=0, \zeta \neq 0$ (Claim 3),
$D\neq 0, \zeta \neq 0$ (Claim 4)
and
$D\neq 0, \zeta = 0$ (Claim 5).

 \textbf{Claim 3.} If Claim 2 holds with $D=0$ and $\zeta\neq 0$, then there are linearly independent 
 $\xi^0, \ldots, \xi^3 \in \Lambda^1_p(N)$ and a $\tau \in \{\pm 1\}$
 such that
 \begin{eqnarray}
 T^{0i} &=& 
 \label{cl3:eqA}
 \xi^0 \wedge \xi^i, \quad i\in \{1,2,3\}, \\
 T^{12} &=& \label{cl3:eqB} \tau \xi^1 \wedge \xi^2, \\
 T^{23} &=&\label{cl3:eqC} \tau \xi^2 \wedge \xi^3,\\
 T^{31} &=& \label{cl3:eqD}\tau \xi^3 \wedge \xi^1.
 \end{eqnarray}

 The proof is divided into four steps.  In Step $1$, let us show that
 there are linearly independent $\{\xi^i\}_{i=0}^3$ such that equations
 \eqref{cl3:eqA} hold.  Since $D=0$ in Claim 2, equations
 \eqref{cl3:eqA} hold by setting $\xi^3 = \zeta$. Therefore we only need to
 show that $\{\xi^i\}_{i=0}^3$ are linearly independent.  If there are
 constants $C_0,\ldots, C_3 \in \setR$ such that $\sum_{i=0}^3 C_i \xi^i = 0$,
 then
 \begin{eqnarray*}
 C_1 T^{01} + C_2 T^{02} + C_3 T^{03}  &=&0.
 \end{eqnarray*}
 Thus $C_1 T^{01}\wedge T^{23}=0$ so $C_1=0$ by equation \eqref{eq:TIJcond}. 
 Similarly we obtain $C_2=C_3=0$. Thus $C_0 \xi^0=0$, so $C_0=0$, and
 $\{\xi^i\}_{i=0}^3$ are linearly independent.

 In Step 2, let us show that there exists a $\tau\in \{\pm 1\}$ and  linearly independent
 $\{\xi^i\}_{i=0}^3$ such that equations \eqref{cl3:eqA}--\eqref{cl3:eqB} hold.
 By Step 1, there are linearly independent $\{\xi^i\}_{i=0}^3$ such that
 equations \eqref{cl3:eqA} hold.
 We know that $T^{12} \in \operatorname{span}\{T^{01}, T^{02}\}^\perp$. Hence
 \begin{eqnarray}
 \label{eq:T12genExp}
 T^{12} &=& \xi^0 \wedge \zeta + E \xi^1\wedge \xi^2
 \end{eqnarray}
 where $\zeta\in \operatorname{span}\{\xi^i\}_{i=1}^3$ and $E \in
 \setR$. Equations \eqref{cl3:eqA}, \eqref{eq:TIJcond} and
 \eqref{eq:T12genExp} imply that $\omega=T^{03}\wedge T^{12}=E \xi^0\wedge
 \xi^1\wedge \xi^2\wedge \xi^3$, so $E\neq 0$. Let $\tau =
 \operatorname{sgn} E$.  Since $T^{12}\wedge T^{12}=0$, it follows that
 $\xi^0, \xi^1, \xi^2,\zeta$ are linearly dependent and there are
 constants $C_0, \ldots, C_3$ such that
 $$
  C_0 \xi^0 + C_1 \xi^1 + C_2 \xi^2 + C_3\zeta=0
 $$
 and all $C_0, \ldots, C_3$ are not zero. It is clear that
 $C_0=0$. Since $C_3=0$ is not possible, there are constants $A,B\in
 \setR$ such that
 \begin{eqnarray*}
 T^{12} &=&  \xi^0 \wedge \left( A\xi^1 + B\xi^2\right) + E\xi^1\wedge \xi^2.
 \end{eqnarray*}
 Thus
 \begin{eqnarray*}
 T^{01} &=& \left( \frac 1 {\sqrt{\vert E\vert }} \xi^0\right) 
 \wedge \left( \sqrt{\vert E\vert} \xi^1 + \frac {\tau B} {\sqrt{\vert E\vert}} \xi^0\right), \\
 T^{02} &=&  \left( \frac 1 {\sqrt{\vert E\vert }} \xi^0\right) 
 \wedge \left( \sqrt{\vert E\vert} \xi^2 - \frac {\tau A} {\sqrt{\vert E\vert}} \xi^0\right), \\
 T^{03} &=&  \left( \frac 1 {\sqrt{\vert E\vert }} \xi^0\right) 
 \wedge \left( \sqrt{\vert E\vert} \xi^3\right),\\
 T^{12} &=& \tau \left(  \sqrt{\vert E\vert} \xi^1 + \frac {\tau B}  {\sqrt{\vert E\vert}} \xi^0\right) \wedge \left( \sqrt{\vert E\vert} \xi^2 - \frac{\tau A} {\sqrt{\vert E\vert}} \xi^0\right).
 \end{eqnarray*}
 Since the four covectors inside the parentheses are linearly
 independent, Step 2 follows.

 In Step 3, let us show that there exists a $\tau \in \{\pm 1\}$ and
 linearly independent $\{\xi^i\}_{i=0}^3$ such that equations
 \eqref{cl3:eqA}--\eqref{cl3:eqC} hold. By Step 2, there exists a
 $\tau\in \{\pm 1\}$ and linearly independent covectors
 $\{\xi^i\}_{i=0}^3$ such that \eqref{cl3:eqA}--\eqref{cl3:eqB} hold.
 Since $T^{23} \in \operatorname{span}\{T^{12}, T^{02}, T^{03}\}^\perp$
 we have
 \begin{eqnarray*}
 T^{23} &=& \xi^0 \wedge \zeta + F \xi^2\wedge \xi^3
 \end{eqnarray*}
 for some $\zeta\in \operatorname{span} \{\xi^a\}_{a=1}^2$ and $F\in \setR$.
 Writing out $T^{01}\wedge T^{23}= T^{03}\wedge T^{12}$ shows that $F=\tau$.
 Since $T^{23}\wedge T^{23}=0$ there are
 constants $C_0, \ldots, C_3$ such that
 \begin{eqnarray*}
 C_0 \xi^0 + C_1 \xi^2 + C_2 \xi^3 + C_3\zeta&=&0,
 \end{eqnarray*}
 and all $C_0, \ldots, C_3$ are not zero.  Now $C_0=0$ and
 $C_2=0$. Since $C_3\neq 0$ is not possible, it follows that $\zeta = C
 \xi^2$ for some $C\in \setR$.  Thus $T^{23} = \tau \xi^2\wedge
 (\xi^3-\tau C\xi^0)$, and Step 3 follows by rewriting $T^{01}, T^{02},
 T^{03}, T^{12}, T^{23}$ and checking linear independence as in Step 2.

 In Step 4, let us show that there exists a $\tau\in \{\pm1\}$ and
 linearly independent $\{\xi^i\}_{i=0}^3$ such that equations
 \eqref{cl3:eqA}--\eqref{cl3:eqD} hold. By Step 3, there exist a
 $\tau\in \{\pm 1\}$ and linearly independent covectors
 $\{\xi^i\}_{i=0}^3$ such that equations \eqref{cl3:eqA}--\eqref{cl3:eqC} hold.
 Since $T^{31} \in
 \operatorname{span}\{T^{12}, T^{01}, T^{03}, T^{23}\}^\perp$ it follows that
 \begin{eqnarray*}
     T^{31} &=& A \xi^3 \wedge \xi^1 + B \xi^0 \wedge \xi^2
 \end{eqnarray*}
 for some $A,B\in \setR$. Writing out $T^{02}\wedge T^{31} =
 T^{01}\wedge T^{23}$ gives $A=\tau$ and writing out $T^{31}\wedge
 T^{31} = 0$ gives $B=0$.  This completes the proof of Claim 3.

\textbf{Claim 4.} Suppose Claim 2 holds with $D\neq 0$, $\zeta\neq 0$. If $\sigma
= \operatorname{sgn} D$, then there are linearly independent $\xi^0, \ldots, \xi^3
\in \Lambda^1_p(N)$ such that
\begin{eqnarray}
  T^{01} &=& \label{cl4:eqA} \xi^2 \wedge \xi^3, \\
  T^{02} &=& \label{cl4:eqB}  \xi^3 \wedge \xi^1, \\
  T^{03} &=& \label{cl4:eqC} \sigma \xi^1 \wedge \xi^2, \\
  T^{12} &=& \label{cl4:eqD} \sigma \xi^0 \wedge \xi^3, \\
  T^{23} &=& \label{cl4:eqE} \xi^0 \wedge \xi^1,\\
  T^{31} &=& \label{cl4:eqF} \xi^0 \wedge \xi^2.
\end{eqnarray}

As the proof of Claim 3,  the proof is divided into four
steps.  In Step 1, let us show that there are linearly independent
$\{\xi^i\}_{i=1}^3$ such that equations
\eqref{cl4:eqA}--\eqref{cl4:eqC} hold.  
Let $\{\xi^i\}_{i=0}^3$, $D\neq 0$ and  $\zeta\neq 0$ be as in Claim 2.
Then 
$T^{03}\wedge T^{03}=0$ implies that
$\zeta=A\xi^1 + B\xi^2$ for some $A,B\in \setR$, and
\begin{eqnarray*}
T^{01} &=& \left( -\sqrt{\vert D\vert} \xi^1 - \frac {\sigma B} {\sqrt{ \vert D\vert}} \xi^0\right) \wedge \left( \frac 1 {\sqrt{ \vert D\vert}} \xi^0\right), \\
T^{02} &=& \left( \frac 1 {\sqrt{\vert D\vert }} \xi^0\right) \wedge \left( \sqrt{\vert D\vert} \xi^2 - \frac {\sigma A} {\sqrt{\vert D\vert}} \xi^0\right), \\
T^{03} &=& \sigma \left(  \sqrt{\vert D\vert} \xi^2 - \frac {\sigma A}  {\sqrt{\vert D\vert}} \xi^0\right) \wedge \left( -\sqrt{\vert D\vert} \xi^1 - \frac{\sigma B} {\sqrt{\vert D\vert}} \xi^0\right).
\end{eqnarray*}
Step 1 follows since the covectors in the parentheses are linearly
independent.

In Step 2, let us show that there are linearly independent
$\{\xi^i\}_{i=0}^3$ such that equations
\eqref{cl4:eqA}--\eqref{cl4:eqD} hold.  By Step 1, there are linearly
independent $\{\xi^i\}_{i=1}^3$ such that equations
\eqref{cl4:eqA}--\eqref{cl4:eqC} hold. We know that $T^{12} \in
\operatorname{span}\{T^{01}, T^{02}\}^\perp$. Hence
\begin{eqnarray*}
T^{12} &=& \xi^3 \wedge \zeta + E \xi^1\wedge \xi^2
\end{eqnarray*}
where $\zeta\in \Lambda^1_p(N)$ and $E \in \setR$. Writing out
$T^{03}\wedge T^{12}\neq 0$ and $ T^{12}\wedge T^{12} = 0$ shows that
$\xi^1, \xi^2, \xi^3, \zeta$ are linearly independent and $E=0$.  Step
2 follows by setting $\xi^0 = -\sigma \zeta$.

In Step 3, let us show that there are linearly independent
$\{\xi^i\}_{i=0}^3$ such that equations
\eqref{cl4:eqA}--\eqref{cl4:eqE} hold. By Step 2, there exist linearly
independent $\{\xi^i\}_{i=0}^3$ such that
\eqref{cl4:eqA}--\eqref{cl4:eqD} hold.  Since $T^{23} \in
\operatorname{span}\{T^{12}, T^{02}, T^{03}\}^\perp$, it follows that
\begin{eqnarray*}
   T^{23} &=& \xi^1 \wedge (A\xi^0+ B\xi^3) + E \xi^2\wedge \xi^3
\end{eqnarray*}
for some $A,B,E \in \setR$.
Writing out $T^{01}\wedge T^{23}=T^{03}\wedge T^{12}$ 
and $T^{23}\wedge T^{23}=0$ gives 
$A=-1$ and $E=0$.
Thus
\begin{eqnarray*}
T^{23} &=& (\xi^0 - B\xi^3)\wedge \xi^1,
\end{eqnarray*}
and Step 3 follows since $T^{12}$ can be rewritten as 
$T^{12} =  \sigma (\xi^0 - B\xi^3) \wedge \xi^3$.

In Step 4, let us show that there are linearly independent
$\{\xi^i\}_{i=0}^3$ such that equations
\eqref{cl4:eqA}--\eqref{cl4:eqF} hold. By Step 3, there exist linearly
independent covectors $\{\xi^i\}_{i=0}^3$ such that
\eqref{cl4:eqA}--\eqref{cl4:eqE} hold. Since $T^{31} \in
\operatorname{span}\{T^{01}, T^{03}, T^{12}, T^{23}\}^\perp$ we have
\begin{eqnarray*}
T^{31} &=& A \xi^3 \wedge \xi^1 + B \xi^0 \wedge \xi^2
\end{eqnarray*}
for some $A,B\in \setR$.  Writing out $T^{31}\wedge
T^{02}=T^{01}\wedge T^{23}$ and $T^{31}\wedge T^{31}=0$  gives
$B=1$ and $A=0$, so
equation \eqref{cl4:eqF} holds and Step 4 follows. This completes the
proof of Claim 4.

 \textbf{Claim 5.} Suppose Claim 2 holds with $D\neq 0$ and $\zeta=0$ and let $\sigma
= \operatorname{sgn} D$. Then there
 are linearly independent $\xi^0, \ldots, \xi^3 \in \Lambda^1_p(N)$ such that
\begin{eqnarray}
 T^{0i} &=& \label{cl5:eqA} \xi^0 \wedge \xi^i, \quad i\in \{1,2\}, \\
 T^{03} &=& \label{cl5:eqB}  \sigma \xi^1 \wedge \xi^2, \\
T^{12} &=& \label{cl5:eqD}   \xi^0 \wedge \xi^3,\\
T^{23} &=& \label{cl5:eqE}   \sigma \xi^2 \wedge \xi^3, \\
T^{31} &=& \label{cl5:eqF}   \sigma \xi^3 \wedge \xi^1.
\end{eqnarray}

The proof of Claim 5 is divided into three steps.  In Step 1, we show
that there are linearly independent $\{\xi^i\}_{i=0}^3$ such that
equations \eqref{cl5:eqA}--\eqref{cl5:eqD} hold.  The argument for
Claim 2 shows that
\begin{eqnarray*}
T^{12} &=&   \xi^0\wedge \eta + E \xi^1 \wedge \xi^2
\end{eqnarray*}
for some $E\in \setR$ and $\eta\in \operatorname{span}\{
\xi^i\}_{i=0}^3$.  Writing out $T^{12}\wedge T^{03}\neq 0$ shows that
$\{\xi^0, \xi^1, \xi^2, \eta\}$ are linearly independent.  Then
$T^{12}\wedge T^{12}=0$ implies that $E=0$ and equations
\eqref{cl5:eqA}--\eqref{cl5:eqD} follow by setting $\xi^3 = \sqrt{\vert D\vert} \eta$ 
and suitably scaling $\xi^0, \xi^1, \xi^2$.

In Step 2, we show that there are linearly independent
$\{\xi^i\}_{i=0}^3$ such that equations
\eqref{cl5:eqA}--\eqref{cl5:eqE} hold.  By the argument in Claim 3,
Step 3, there is a $B\in \setR$ such that
\begin{eqnarray*}
T^{23} &=&  \sigma  \xi^2\wedge \left( \xi^3 +  \sigma B  \xi^0\right)
\end{eqnarray*}
and equations \eqref{cl5:eqA}--\eqref{cl5:eqD} follow by redefining $\xi^3 \mapsto \xi^3 + \sigma B  \xi^0$.

In Step 3, we show that there are linearly independent
$\{\xi^i\}_{i=0}^3$ such that equations
\eqref{cl5:eqA}--\eqref{cl5:eqF} hold. This follows by repeating the
argument in Claim 3, Step 4.

We can now complete the proof.
When Claim 3 holds, then equation \eqref{eq:ThmA2Yeq} follows by replacing covectors 
$\xi^i \mapsto \{\tau \xi^0, \xi^1, \xi^2, \xi^3\}$ and
$\alpha = 0$ when $\tau = -1$ and
$\alpha = 1$ when $\tau = 1$.
 When Claim 4 holds, then equation \eqref{eq:ThmA2Yeq} holds with 
 $\alpha = 2$ when $\sigma = 1$ and
 $\alpha =3$ when $\sigma = -1$.
When Claim 5 holds, then equation \eqref{eq:ThmA2Yeq} follows by replacing covectors 
$\xi^i \mapsto \{\xi^3, -\xi^2, \xi^1, -\sigma \xi^0\}$ and
$\alpha = 2$ when $\sigma = 1$ and
$\alpha = 3$ when $\sigma = -1$.
 \end{proof}

 \section{Normal form for a $H$-selfadjoint matrix}
 \HOX{Appendix B OK}
 \label{app:normalForms}
 The Jordan normal form theorem (Theorem \ref{thm:jordan}) is a
 fundamental theorem in linear algebra. 
 In this appendix we formulate Theorem \ref{thm:simuDiag} which extends
 this result to two matrices that are suitably compatible. The result
 is known as the \emph{canonical form of an $H$-selfadjoint
  matrix}. The result and its proof can be found in \cite[Theorem
 12.2]{LancasterRodman:2005}.

 First we define the block matrices that appear in the Jordan normal form
 theorem for real matrices \cite[Theorem 2.2]{LancasterRodman:2005}.
 For $m\in \{1,2,\ldots\}$, $\lambda, \sigma \in \setR$ and $\tau>0$ let
 \begin{eqnarray*}
 R_{m}(\lambda) &=& 
 \begin{pmatrix}
 \lambda  &  1           &           &                    &  \\
               & \lambda &  1       &                   &   \\
               &               &\ddots &   \ddots      &   \\
               &               &            & \lambda    & 1\\
              &               &             &                  &  \lambda 
 \end{pmatrix}\in \setR^{m\times m},\\
 C_{2m}(\sigma\pm i\tau) &=& 
 \begin{pmatrix}
  \sigma & \tau & 1 & 0 &  &  &  &&\\
  -\tau & \sigma & 0 & 1 &  &  &&  &\\
         & &   \sigma & \tau & 1 &  0&&&  \\
         & &   -\tau & \sigma & 0 & 1 &&&  \\
 &&&&\ddots &&\ddots& \\
 &&&&  &\ddots& &   1 & 0 \\
 &&&& && \ddots&   0 & 1 \\
 & &&       &  && &   \sigma &\tau \\
 & &&      &  && &   -\tau & \sigma  
 \end{pmatrix}\in \setR^{2m\times 2m}.
 \end{eqnarray*}

 Moreover, let $F_1=(1)$ and for $m\ge 2$, let $F_m$ be the
 \emph{standard involutary permutation} matrix
 \begin{eqnarray}
 \label{eq:Fmdef}
 F_{m} &=& 
 \begin{pmatrix}
                &    &                           1 \\
               &     
 \mathrel{\raisebox{-1.0em}{
      \reflectbox{\rotatebox[origin=c]{180}{$\ddots$}}
 } }&          \\
 1              &               &                        
 \end{pmatrix}\in \setR^{m\times m}.
 \end{eqnarray}

 For square matrices $M_1, \ldots, M_k$, we define
 \begin{eqnarray}
 \label{eq:oPlusBlockDef}
 M_1\oplus \cdots \oplus M_k &=&
 \begin{pmatrix}
 M_1& &\\
 & \ddots & \\
 &  & M_k\\
 \end{pmatrix}.
 \end{eqnarray}

 The next theorem is the \emph{Jordan normal form theorem} with the ordering
 in equation \eqref{eq:jordanInEq} being a consequence of Proposition
 \ref{prop:jordanPermutationBlocks}.  We say that a matrix $A\in
 \setR^{n\times n}$ is in \emph{Jordan normal form} if Theorem \ref{thm:jordan}
 holds with $L=\operatorname{Id}$.

 \begin{theorem}
 \label{thm:jordan}
 Suppose $A\in \setR^{n\times n}$. Then  there exists an invertible matrix $L\in \setR^{n\times n}$ such that
 \begin{eqnarray}
 \label{eq:jordanDecomp}
 L^{-1} A L &=& \bigoplus_{j=1}^r \,\,R_{m_j}(\lambda_j)\,\,\, \oplus\,\,\,\, \bigoplus_{j=1}^s C_{2k_j}(\sigma_j \pm i\tau_j), 
 \end{eqnarray}
 for some $r,s\ge 0$, $\lambda_1, \ldots, \lambda_r\in \setR$, 
 $\sigma_1, \ldots, \sigma_s\in \setR$,
 $\tau_{1}, \ldots, \tau_s>0$ and
 \begin{eqnarray}
 \label{eq:jordanInEq}
 m_1 \ge \cdots\ge m_r \ge 1, \quad k_{1}\ge \cdots \ge k_s\ge 1.
 \end{eqnarray}
 Moreover, suppose that $\widetilde L$ is another $n\times n$ matrix such that
 equations \eqref{eq:jordanDecomp} and \eqref{eq:jordanInEq} hold for
 block matrices 
 $(R_{\widetilde m_j}(\widetilde \lambda_j))_{j=1}^{\widetilde r}$  and
 $(C_{2 \widetilde k_j}(\widetilde \sigma_j\pm i\widetilde
 \tau_j))_{j=1}^{\widetilde s}$. Then $\widetilde r=r$, $\widetilde s = s$
 and 
 $(R_{\widetilde  m_j}(\widetilde \lambda_j))_{j=1}^r$ 
 is a permutation of 
 $(R_{m_j}(\lambda_j))_{j=1}^r$ 
 and 
 $(C_{2\widetilde k_j}(\widetilde \sigma_j\pm i\widetilde \tau_j))_{j=1}^s$
 is a permutation of
 $(C_{2k_j}(\sigma_j\pm i\tau_j))_{j=1}^s$.
 In particular, $\widetilde m_j = m_j$ for $j=1, \ldots, r$ and
 $\widetilde k_j = k_j$ for $j=1, \ldots, s$.
 \end{theorem}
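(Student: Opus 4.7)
The plan is to deduce the real Jordan form from the complex Jordan normal form. First, I would apply the complex Jordan normal form theorem to $A$ viewed as a matrix over $\setC$, obtaining an invertible $P\in \setC^{n\times n}$ with $P^{-1}AP$ a direct sum of upper-triangular Jordan blocks $J_m(\mu)$ for $\mu\in \setC$. Since $A$ is real, complex conjugation carries generalized eigenvectors for $\mu$ to those for $\bar\mu$ and preserves chain lengths, so the multiset of complex Jordan blocks is invariant under $\mu\mapsto \bar \mu$.

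Second, I would split the complex blocks according to whether the eigenvalue is real or strictly complex. For each real eigenvalue $\lambda$, a complex Jordan chain $(v_1,\ldots,v_m)$ can be replaced by a real chain of the same length (since $\kerOp((A-\lambda I)^k)\cap \setR^n$ has the same dimension as its complexification), giving a block $R_m(\lambda)$ in the chosen real basis. For each conjugate pair $\mu=\sigma+i\tau$ with $\tau>0$ and chain $(v_1,\ldots,v_m)$ paired with its conjugate $(\bar v_1,\ldots,\bar v_m)$, I would replace the $2m$ complex vectors by the interlaced real basis $(\REAL v_1, \IMAG v_1, \REAL v_2, \IMAG v_2,\ldots, \REAL v_m, \IMAG v_m)$. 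Separating real and imaginary parts of $Av_j = \mu v_j + v_{j-1}$ then shows that the matrix of $A$ in this ordered basis is exactly the block $C_{2m}(\sigma\pm i\tau)$ as displayed in the theorem. Concatenating the blocks and permuting them to satisfy $m_1\ge \cdots \ge m_r$ and $k_1\ge \cdots\ge k_s$ produces the required real invertible $L$.

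For the uniqueness statement, I would use that the numerical invariants $\dim_{\setC}\kerOp((A-\mu I)^k)$ (computed over $\setC$) depend only on $A$ and determine, for each $\mu\in \setC$, the multiset of Jordan block sizes at $\mu$. Since each real block $R_m(\lambda)$ contributes exactly the complex block $J_m(\lambda)$ and each block $C_{2m}(\sigma\pm i\tau)$ contributes exactly the conjugate pair $J_m(\sigma+ i\tau), J_m(\sigma- i\tau)$, the multiset of real and complex-pair blocks is uniquely determined by $A$. Any two decompositions of the form in equations \eqref{eq:jordanDecomp} and \eqref{eq:jordanInEq} therefore differ only by a permutation of blocks, yielding the stated uniqueness.

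The main obstacle is not the existence per se but matching the precise shape of $C_{2m}(\sigma\pm i\tau)$ as displayed, with its interlaced pattern of $\sigma,\tau,-\tau$ entries and superdiagonal ones. A different ordering of the real basis (for instance, grouping all $\REAL v_j$ before all $\IMAG v_j$) produces a similar but differently shaped block; verifying that the interlaced basis yields precisely the matrix $C_{2m}(\sigma\pm i\tau)$ in the statement is the one place that requires a careful, though routine, index-level computation.
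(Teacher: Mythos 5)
Your proposal is mathematically sound, but note that the paper does not prove Theorem \ref{thm:jordan} at all: it treats the real Jordan form as a classical result, citing \cite[Theorem 2.2]{LancasterRodman:2005}, and only remarks that the ordering \eqref{eq:jordanInEq} can be arranged via Proposition \ref{prop:jordanPermutationBlocks} (block permutation by an orthogonal similarity). Your route --- complexify, apply the complex Jordan theorem, replace chains at real eigenvalues by real chains, and replace each conjugate pair of chains by the interlaced real basis $(\REAL v_1,\IMAG v_1,\ldots,\REAL v_m,\IMAG v_m)$ --- is the standard self-contained derivation, and it does yield exactly the displayed block: with the convention $Av_j=\mu v_j+v_{j-1}$, $\mu=\sigma+i\tau$, one gets $A\,\REAL v_j=\sigma\,\REAL v_j-\tau\,\IMAG v_j+\REAL v_{j-1}$ and $A\,\IMAG v_j=\tau\,\REAL v_j+\sigma\,\IMAG v_j+\IMAG v_{j-1}$, which is precisely $C_{2m}(\sigma\pm i\tau)$ with its $\begin{pmatrix}\sigma&\tau\\-\tau&\sigma\end{pmatrix}$ diagonal blocks and identity superdiagonal blocks. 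Your uniqueness argument via the complex invariants $\dim_{\setC}\kerOp((A-\mu I)^k)$ is also correct, since $\tau>0$ makes the passage from the complex block multiset back to the real block multiset unambiguous ($R_m(\lambda)\mapsto J_m(\lambda)$, $C_{2m}(\sigma\pm i\tau)\mapsto\{J_m(\sigma+i\tau),J_m(\sigma-i\tau)\}$ is injective). Two points you gloss over, both routine but worth a sentence in a full write-up: you must choose the Jordan chains at $\bar\mu$ to be the literal complex conjugates of those at $\mu$ (conjugation commutes with $A$, so this is possible), and you must verify that the resulting real and imaginary parts are linearly independent over $\setR$ (they span over $\setC$ the same $2m$-dimensional space as $v_j,\bar v_j$). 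Compared with the paper's citation, your approach buys self-containedness at the cost of length; the paper's choice keeps the appendix focused on the less standard $H$-selfadjoint refinement in Theorem \ref{thm:simuDiag}, which is the result actually needed in the proof of Theorem \ref{thm:classification}.
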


 The next proposition shows that the blocks in $M_1\oplus \cdots \oplus M_k$ can
 be permutated into any order using a similarity transformation
 \cite[p. 31]{Fiedler:1986}.

 \begin{proposition}
 \label{prop:jordanPermutationBlocks}
 Suppose 
 \begin{eqnarray*}
 A &=& M_1\oplus \cdots \oplus M_k,
 \end{eqnarray*}
 where $M_1, \ldots, M_k$ are real square matrices, and suppose that
 $\pi$ is a permutation of $\{1,2,\ldots, k\}$. Then there exists a 
 real orthogonal matrix $P$ such that
 \begin{eqnarray*}
 P^{-1} A P &=& M_{\pi(1)} \oplus \cdots \oplus M_{\pi(k)}.
 \end{eqnarray*}
 \end{proposition}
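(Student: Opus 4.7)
The plan is to build $P$ explicitly as a block permutation matrix and verify the two required properties directly.

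Let $n_i$ denote the size of $M_i$, so $A\in \setR^{n\times n}$ with $n=n_1+\cdots+n_k$. Partition the standard basis $e_1,\ldots,e_n$ of $\setR^n$ into consecutive groups $E_1,\ldots,E_k$, where $E_i$ consists of the $n_i$ basis vectors indexed from $s_i+1$ to $s_i+n_i$, with $s_i=n_1+\cdots+n_{i-1}$. The decomposition $A=M_1\oplus\cdots\oplus M_k$ means precisely that $A$ preserves each $V_i=\operatorname{span} E_i$ and acts on it (in the ordered basis $E_i$) by the matrix $M_i$.

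To produce the permuted block structure $M_{\pi(1)}\oplus\cdots\oplus M_{\pi(k)}$, I would define $P$ as the linear map that, for each $i\in\{1,\ldots,k\}$ and each $j\in\{1,\ldots,n_{\pi(i)}\}$, sends the $j$-th basis vector of the new $i$-th group (located at position $t_i+j$ with $t_i = n_{\pi(1)}+\cdots+n_{\pi(i-1)}$) to the $j$-th basis vector of $V_{\pi(i)}$ (that is, $e_{s_{\pi(i)}+j}$). Because $P$ permutes standard basis vectors, it is a genuine permutation matrix, hence real orthogonal, so in particular $P^{-1}=P^t$.

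For the conjugation identity, I would compute $P^{-1}AP$ on the basis vectors of each new group: applying $P$ lands in $V_{\pi(i)}$, applying $A$ acts by $M_{\pi(i)}$ within $V_{\pi(i)}$, and applying $P^{-1}$ transports the result back into the $i$-th new group with the same coordinates. This shows $P^{-1}AP$ is block-diagonal with the $i$-th block equal to $M_{\pi(i)}$, as required.

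There is no real obstacle here; the only delicate point is bookkeeping the two different block partitions of $\{1,\ldots,n\}$ (the one for $A$ and the one for $P^{-1}AP$) so that $P$ matches them correctly. An equivalent and slightly lighter route is to note that every permutation is a composition of transpositions, handle a single swap of two adjacent blocks by the obvious block permutation matrix (which is clearly orthogonal), and then compose, using that a product of orthogonal matrices is orthogonal.
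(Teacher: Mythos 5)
Your proof is correct. Your main route --- constructing the full block permutation matrix $P$ in one shot, by matching the two block partitions of the standard basis and checking directly that $P$ permutes basis vectors (hence is orthogonal) and that $P^{-1}AP$ has $M_{\pi(i)}$ as its $i$-th block --- differs slightly from the paper's argument, which proceeds compositionally: it exhibits the single swap matrix $P=\begin{pmatrix} 0_{n\times m} & I_{n\times n}\\ I_{m\times m} & 0_{m\times n}\end{pmatrix}$ conjugating $M_1\oplus M_2$ to $M_2\oplus M_1$, and then writes an arbitrary permutation as a composition of transpositions of adjacent blocks, using that products of orthogonal matrices are orthogonal. Your direct construction buys an explicit formula for $P$ for any $\pi$ at the cost of heavier index bookkeeping, while the paper's route trades the bookkeeping for a short induction on transpositions; the alternative you sketch in your final sentences is precisely the paper's proof, so your write-up in fact subsumes both.
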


 For example, if $M_1\in \setR^{n\times n}$ and $M_2\in \setR^{m\times m}$ then 
 $P^{-1}\cdot (M_1\oplus M_2) \cdot P = M_2\oplus M_1$ for 
 $P = 
 \begin{pmatrix}
 0_{n\times m} & I_{n\times n}\\
 I_{m\times m} & 0_{m\times n}
 \end{pmatrix}$,
 where $0_{a\times b}$ is the $a\times b$ zero matrix, and $I_{a\times a}$ is the $a\times a$ identity matrix.

 \comment{
 Let $Q = 
 \begin{pmatrix} 
 0_{n\times m} & I_{n\times n}\\ 
 I_{m\times m} & 0_{m\times n}
 \end{pmatrix}$, 
 where $0_{a\times b}$ is the $a\times b$ zero matrix, and $I_{a\times a}$ is the $a\times a$ identity matrix. 
 Then $Q^{t} = Q^{-1}$ and 
 \begin{eqnarray*} 
 Q^{-1}\cdot \begin{pmatrix} 
 A & 0\\ 
 0 & B
 \end{pmatrix}\cdot Q &=&
 \begin{pmatrix} 
 B & 0\\ 
 0 & A
 \end{pmatrix}, 
 \end{eqnarray*} 
 for all $A\in \setR^{n\times n}$ and $B\in \setR^{m\times m}$. 
 The result follows since any permutation can be written as a composition of transpositions
 of adjecent elements.}

 \begin{theorem}
 \label{thm:simuDiag}
 Suppose $A,B\in \setR^{n\times n}$ are matrices such that 
 $$
 B=B^t, \quad \det B \neq 0, \quad BA=A^tB.
 $$
 Then
 there exists an invertible  $n\times n$ matrix $L$ such that
 \begin{eqnarray*}
 L^{-1} A L &=& \bigoplus_{j=1}^r \,\,R_{m_j}(\lambda_j)\,\,\, \oplus\,\,\,\, \bigoplus_{j=1}^s C_{2k_j}(\sigma_j \pm i\tau_j), \\
 L^t B L &=& \bigoplus_{j=1}^r \,\,\epsilon_{j} F_{m_j}\quad\,\,\, \oplus\,\,\, \bigoplus_{j=1}^s F_{2k_j},
 \end{eqnarray*}
 where $r, s\ge 0$, $\lambda_1, \ldots, \lambda_r\in \setR$, $\sigma_1, \ldots, \sigma_s\in \setR$, 
 $\tau_{1}, \ldots, \tau_s>0$ and
 $\epsilon_1, \ldots, \epsilon_r \in \{\pm 1\}$. Moreover, 
 \begin{enumerate}
 \item
 \label{cond:simuPerA}
 $m_1 \ge \cdots\ge m_r \ge 1$ and
 $k_{1}\ge \cdots \ge k_s\ge 1$,
 \item
 \label{cond:simuPerB}
 if $m_{a} = {m_{a+1}} =  \cdots  = {m_{a+d}}$ for some $1\le a<a+d\le r$,
  then 
 $$
    \epsilon_{a} \le    \epsilon_{{a+1}} \le \cdots   \le \epsilon_{{a+d}}.
 $$
 \end{enumerate}
 \end{theorem}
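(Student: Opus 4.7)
The plan is to interpret the hypothesis intrinsically and then reduce to a block-by-block normalization problem. The relation $BA = A^t B$ together with $B^t = B$ and $\det B \neq 0$ says precisely that $A$ is self-adjoint with respect to the non-degenerate symmetric bilinear form $\langle x,y\rangle_B := x^t B y$ on $\setR^n$. So the goal is to find a basis that simultaneously realises the Jordan form of $A$ (from Theorem \ref{thm:jordan}) and puts the Gram matrix of $\langle \cdot,\cdot\rangle_B$ into the prescribed block shape. Once this is in place, the block permutation freedom from Proposition \ref{prop:jordanPermutationBlocks} will let us enforce the orderings \ref{cond:simuPerA} and \ref{cond:simuPerB}.

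The first reduction is to the generalized eigenspaces of $A$. For real eigenvalues $\lambda \neq \mu$, let $V_\lambda, V_\mu$ be the corresponding generalized eigenspaces; I would verify directly from $BA = A^t B$ that $(A-\lambda I)^k$ being self-adjoint with respect to $\langle\cdot,\cdot\rangle_B$ forces $\langle V_\lambda, V_\mu\rangle_B = 0$, and similarly for generalized eigenspaces associated to conjugate complex pairs $\sigma \pm i\tau$. Since $B$ is non-degenerate and these subspaces are mutually $B$-orthogonal, $B$ restricts to a non-degenerate form on each summand, and the problem splits. From here on one can work one eigenvalue (or one conjugate pair) at a time.

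The core step is the simultaneous normalization on a single generalized eigenspace $V$. Consider first the real case with eigenvalue $\lambda$, so $N := (A - \lambda I)|_V$ is nilpotent and $B$-self-adjoint. Choose a cyclic decomposition $V = \bigoplus_j \setR[N]\, v_j$ from the Jordan structure, with cyclic block sizes $m_1 \geq m_2 \geq \cdots$. The key computation is that for two cyclic generators $v, w$ of lengths $m \geq m'$, the matrix of $\langle\cdot,\cdot\rangle_B$ in the basis $(v, Nv, \ldots, N^{m-1}v, w, Nw, \ldots, N^{m'-1}w)$ is determined by the numbers $\langle N^i v, N^j w\rangle_B$; self-adjointness of $N$ collapses this to a single sequence in $i+j$. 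I would then argue inductively: by replacing cyclic generators $v_j \mapsto v_j + p_j(N) v_i$ with polynomials $p_j$, and if necessary rescaling, one can clear the cross-terms between different cyclic blocks and reduce each diagonal block to the anti-diagonal form $\epsilon_j F_{m_j}$ with $\epsilon_j \in \{\pm 1\}$. The sign $\epsilon_j$ is the essential invariant that cannot be removed because rescaling $v_j \mapsto c v_j$ multiplies the Gram block by $c^2 > 0$. For a complex conjugate pair $\sigma \pm i\tau$, one works over the complexification, handles the pair of $\lambda = \sigma + i\tau$ and $\bar\lambda = \sigma - i\tau$ generalized eigenspaces together (which are $B$-paired since complex eigenvalues are not real), and produces the $F_{2k_j}$ blocks with fixed sign $+1$ because the two conjugate pieces force the sign to be absorbed.

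Finally, the ordering clauses are handled as follows. Condition \ref{cond:simuPerA} is achieved by applying Proposition \ref{prop:jordanPermutationBlocks} to permute the $R_{m_j}(\lambda_j)$ and $C_{2k_j}(\sigma_j \pm i\tau_j)$ blocks into decreasing size; because the permutation there is an orthogonal conjugation that just moves entire diagonal blocks of both matrices in lockstep, the paired $\epsilon_j F_{m_j}$ and $F_{2k_j}$ blocks are permuted along with the $A$-blocks and the shape of $L^t B L$ is preserved. For condition \ref{cond:simuPerB}, within a group of equal sizes $m_a = \cdots = m_{a+d}$ the associated $\epsilon$-signs can be independently permuted (by the same block swap), so we arrange them in non-decreasing order. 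The main obstacle I expect is the induction in the preceding paragraph: carefully exhibiting the polynomial adjustments of cyclic generators that decouple distinct Jordan blocks under $\langle\cdot,\cdot\rangle_B$ while preserving their Jordan structure is the technical heart, and this is exactly where one relies on the detailed argument in \cite[Theorem 12.2]{LancasterRodman:2005}.
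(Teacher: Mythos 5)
Your outline is sound and is essentially the same route the paper takes: the paper does not prove Theorem \ref{thm:simuDiag} at all, but cites \cite[Theorem 12.2]{LancasterRodman:2005} for the canonical form of a pair $(A,B)$ with $A$ $B$-selfadjoint, with the ordering clauses \ref{cond:simuPerA} and \ref{cond:simuPerB} obtained exactly as you describe, by block permutations via Proposition \ref{prop:jordanPermutationBlocks} (an orthogonal $P$ moves the $R_{m_j}(\lambda_j)$, $C_{2k_j}$ blocks and the $\epsilon_j F_{m_j}$, $F_{2k_j}$ blocks in lockstep). Your sketch of the citation's internals --- $B$-orthogonality of generalized eigenspaces, collapsing the Gram data on cyclic blocks to a sequence in $i+j$, polynomial adjustment of generators, the sign characteristic only for real eigenvalues --- is the standard argument, and since you defer its technical core to the same reference, there is no substantive difference from the paper's treatment.
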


\textbf{Acknowledgements.} 
The author gratefully appreciates financial
support by the Academy of Finland (project 13132527 and Centre of
Excellence in Inverse Problems Research), and by the Institute of
Mathematics at Aalto University.
I would like to thank Alberto Favaro and Tony Liimatainen for useful discussions.



\providecommand{\bysame}{\leavevmode\hbox to3em{\hrulefill}\thinspace}
\providecommand{\MR}{\relax\ifhmode\unskip\space\fi MR }
\providecommand{\MRhref}[2]{%
  \href{http://www.ams.org/mathscinet-getitem?mr=#1}{#2}
}
\providecommand{\href}[2]{#2}

\end{document}